\theoremstyle{plain}
\newtheorem{proposition}{Proposition}
\newtheorem{lemma}{Lemma}
\newtheorem{theorem}{Theorem}
\newtheorem{definition}{Definition}
\newtheorem{remark}{Remark}
\def\bmg{{\bm g}}
\def\bmi{{\bm i}}
\def\bmj{{\bm j}}
\def\bmM{{\bm M}}
\def\bmP{{\bm P}}
\def\bmQ{{\bm Q}}
\def\notSigma{{\not\!\Sigma}}
\def\notK{{\not\!\! K}}
\def\notL{{\not\!\!L}}
\def\notD{{\not\!\!D}}
\def\notd{{\not\!d}}
\def\notl{{\not\!l}}
\def\notn{{\not\!n}}
\def\notJ{{\not\!J}}
\def\notj{{\not\!j}}
\def\fraka{\mathfrak{a}}
\def\frakb{\mathfrak{b}}
\def\frakc{\mathfrak{c}}
\def\frakd{\mathfrak{d}}
\def\frake{\mathfrak{e}}
\newcounter{mnotecount}
\newcommand{\mnotex}[1]
{\protect{\stepcounter{mnotecount}}$^{\mbox{\footnotesize $\bullet$\themnotecount}}$ 
\marginpar{
\raggedright\tiny\em
$\!\!\!\!\!\!\,\bullet$\themnotecount: #1} }
\begin{document}

\title{\textbf{Construction of anti-de Sitter-like spacetimes using the metric
conformal Einstein field equations: the tracefree matter case}}
\author{Diego A. Carranza\footnote{E-mail address:{\tt d.a.carranzaortiz@qmul.ac.uk}}\ }
\author{Juan A. Valiente Kroon\footnote{E-mail address:{\tt j.a.valiente-kroon@qmul.ac.uk}}}
\affil{School of Mathematical Sciences, Queen Mary University of London,
Mile End Road, London E1 4NS, United Kingdom.}

\maketitle

\begin{abstract}

\noindent Using a metric conformal formulation of the Einstein equations, we
develop a construction of 4-dimensional anti-de Sitter-like spacetimes coupled
to  tracefree matter models. Our strategy relies on the formulation of an
initial-boundary problem for a system of quasilinear wave equations for various
conformal fields by exploiting the conformal and coordinate gauges. By
analysing the conformal constraints we show a systematic procedure to prescribe
initial and boundary data. This analysis is complemented by the propagation of
the constraints, showing that a solution to the wave equations implies a
solution to the Einstein field equations. In addition, we study three explicit
tracefree matter models: the conformally invariant scalar field, the Maxwell
field and the Yang-Mills field.  For each one of these we identify the basic
data required to couple them to the system of wave equations. As our main
result, we establish the local existence and uniqueness of solutions for the
evolution system in a neighbourhood around the corner, provided compatibility
conditions for the initial and boundary data are imposed up to a certain order.

\end{abstract}

\section{Introduction}

The study of solutions to the Einstein equations with negative Cosmological
constant, the so-called \emph{anti-de Sitter-like} spacetimes, results
particularly challenging due to the presence of a timelike conformal boundary
located at spatial infinity. This boundary makes the spacetime non-globally
hyperbolic and, thus, it cannot be fully reconstructed from an initial value
problem. Accordingly, information on the conformal boundary must also be
provided. In this scenario, the methods of conformal geometry offer a natural
approach to face this challenge by providing an (unphysical) auxiliary
spacetime where the boundary is located at a finite distance. 

Although the use of conformal methods into General Relativity can be traced
back to Penrose's seminal work \cite{Pen63}, a satisfactory conformal
formulation of the Einstein field equations from the point of view of the
theory of partial differential equations was first obtained by Friedrich
\cite{Fri81a}. The latter enabled a systematic study of asymptotically simple
solutions to Einstein equations. In particular, it made possible a construction
of vacuum anti-de Sitter-like spacetimes where a non-metric formulation of
conformal Einstein field equations is employed \cite{Fri95, Fri14}. At the
heart of this approach is a first order symmetric hyperbolic system for a set
of conformal fields obtained by exploiting the properties of certain conformal
invariants ---the so-called conformal geodesics.  For this evolution system,
maximally dissipative boundary conditions are considered in order to establish
a result of local existence of solutions.  Despite identifying some boundary
data in a covariant manner, the type of equations used in this construction, in
conjunction with the gauge choice, makes it difficult to implement this scheme
in numerical codes. 

The latter issue becomes particularly relevant in view of the growing interest
in the conjectured instability of the anti-de Sitter spacetime ---see
\cite{Biz13} for an entry point into the subject. A considerable amount of
effort has been directed to understand the dynamics of this class of solutions
via numerical methods ---see, for example, \cite{BizRos11, BizMalRos15,
DiaHorSan12, DiaHorMarSan12, AbaSilLopMasSer14, SilLopMasSer15}.  Also,
alternative hyperbolic formulations coupling the scalar field have been
discussed in \cite{SanSop16a, SanSop16b} and a proof of this conjecture has
been obtained for the Einstein-Vlasov system with spherical symmetry
\cite{Mos18}. 

As an alternative to solving first order systems, Paetz \cite{Pae15} showed
that, in the vacuum case, it is possible to construct a second order evolution
system from the conformal Einstein equations. Based on this result, in
\cite{CarVal18b} an alternative construction for vacuum anti-de Sitter-like
spacetimes has recently been given. Using a metric version of the conformal
Einstein fields equations, a suitable choice of coordinates makes it possible
to obtain a quasilinear system of wave equations for a set of conformal fields.
Moreover, the identified free boundary data consist of a Lorentzian 3-metric on
this hypersurface together with a linear combination of the incoming and
outgoing gravitational radiation. Due to the manifestly hyperbolic nature of
the system under consideration, it is expected that this scheme can be more
easily adapted to current numerical codes.

The study of the Einstein field equations coupled to some suitable matter model
is generally carried out on a case-by-case manner. Remarkably,  in the context
of conformal methods, tracefree matter models are amenable to a more systematic
study. This class of matter models contains several cases of interest such as
the electromagnetic and Yang-Mills fields. In \cite{Fri91}, Friedrich has
provided a suitable conformal formulation of the Einstein field equations under
the assumption of a tracefree energy-momentum tensor, which, in turn, has been
exploited to establish stability results for the Einstein-Yang-Mills system
with positive and zero Cosmological constant.  Motivated by this investigation,
a local existence and uniqueness result for the same system has been presented
in \cite{LueVal14a} under the assumption of spherical symmetry. Despite the
complications from considering non-trivial matter fields, advances have been
made in a variety of scenarios using different approaches ---see, for example,
\cite{Hol12, Fri14c, Fri17}.  Nevertheless, a satisfactory conformal
formulation allowing a methodical study of general matter models remains
elusive.

In the present article we generalise the analysis in \cite{CarVal18b} and
consider the construction of anti-de Sitter-like spacetimes coupled to
tracefree matter models. Exploiting the conformal freedom we set an appropriate
gauge yielding a system of geometric wave equations for the relevant conformal
fields \cite{CarHurVal19}. A suitable coordinate choice allows us to cast
it as a system of quasilinear of wave equations, provided the matter field has
\emph{good} properties. For this type of hyperbolic evolution equations there
are available results concerning the local existence and uniqueness of
solutions ---see e.g. \cite{CheWah83,DafHru85}. The conformal constraints
supply the system with adequate initial and boundary data.  In this work three
models of tracefree matter are considered: the conformally invariant scalar
field, the Maxwell field and the Yang-Mills field. Initial and boundary data
have been identified for each one of them, as well as an analysis of the
corresponding propagation of the constraints. Given the amount of heavy
calculations some parts of this work require, the suite \texttt{xAct} of
\texttt{Mathematica} for tensorial computations has been employed ---see
\cite{xAct}.

\medskip
The main result of the article can be stated as follows:

\begin{theorem}
\label{Theorem:MainTracefree}
Let $\mathcal{S}_\star$ be a 3-dimensional spacelike hypersurface with boundary
$\partial\mathcal{S}_\star$  and  on it  smooth tracefree anti-de Sitter-like
initial data for the Einstein field equations coupled to either: (i) the
conformally invariant scalar field, (ii) the Maxwell field or (iii) the
Yang-Mills field.  Consider the cylinder $\mathscr{I}_{\tau_\bullet}\equiv[0,
\tau_\bullet) \times \partial\mathcal{S}_\star$ for some $\tau_\bullet > 0$
endowed with a set of smooth fields satisfying the conformal Einstein
constraints on $\mathscr{I}_{\tau_\bullet}$. Assume suitable basic boundary
data for the tracefree matter fields. If, in addition, the data on
$\mathcal{S}_\star$ and on $\mathscr{I}_{\tau_\bullet}$ satisfy, up to some
order, compatibility (i.e. corner) conditions at $\partial\mathcal{S}_\star$,
then there exists a smooth solution to the Einstein field equations with
$\lambda<0$ coupled to any of the aforementioned tracefree matter fields in a
neighbourhood of $\partial\mathcal{S}_\star$.
\end{theorem}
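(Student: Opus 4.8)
The plan is to reduce Theorem~\ref{Theorem:MainTracefree} to a standard local existence result for an initial-boundary value problem for a symmetric-hyperbolic (or second-order quasilinear wave) system, and then verify that a solution of that auxiliary system is in fact a solution of the conformal Einstein field equations coupled to the chosen matter model. First I would assemble, from the gauge analysis described in the Introduction (following \cite{CarHurVal19,CarVal18b}), the full list of unknowns: the conformal factor $\Xi$ (here written schematically; in the paper's notation a rescaled $\phi$), its derivatives, the rescaled metric components $g_{\mu\nu}$, the Schouten tensor $L_{\mu\nu}$, the rescaled Weyl tensor $d_{\mu\nu\lambda\rho}$, and the matter unknowns appropriate to cases (i)--(iii) (the conformally coupled scalar with its gradient; the Maxwell potential/field; the Yang-Mills connection and curvature). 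With the conformal gauge source functions and the coordinate choice fixed, these obey a closed quasilinear system of wave equations of the form $g^{\mu\nu}\partial_\mu\partial_\nu \mathbf{u} = F(\mathbf{u},\partial\mathbf{u})$, and the role of the hypotheses "good properties of the matter field" is precisely to guarantee that the matter contributions to $F$ are smooth functions of the listed unknowns and their first derivatives, so that the principal part remains $g^{\mu\nu}\partial_\mu\partial_\nu$.

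Next I would set up the initial-boundary value problem on the wedge-shaped domain: initial data on $\mathcal{S}_\star$ obtained by solving the conformal constraints (Hamiltonian, momentum, and the matter constraints), together with the supplied fields on the cylinder $\mathscr{I}_{\tau_\bullet}$ satisfying the conformal Einstein constraints on $\mathscr{I}_{\tau_\bullet}$, and the "suitable basic boundary data" for the matter fields. One has to phrase the boundary conditions in a form admissible for the local existence theorems quoted in the paper (\cite{CheWah83,DafHru85}); concretely, the boundary is characteristic/timelike and one prescribes a Dirichlet-type condition for the metric-sector unknowns (the Lorentzian $3$-metric on $\mathscr{I}_{\tau_\bullet}$) plus the radiation-type combination, and a matter-model-dependent condition (e.g.\ a reflective/conformally natural condition for Maxwell and Yang-Mills, a Dirichlet or Robin condition for the scalar). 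The compatibility conditions at the corner $\partial\mathcal{S}_\star$ up to the required order are exactly those obtained by differentiating the evolution equations and equating the one-sided limits of the initial and boundary prescriptions; imposing them up to the order dictated by the regularity threshold of the quoted theorem yields a unique smooth solution $\mathbf{u}$ in a neighbourhood of $\partial\mathcal{S}_\star$.

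Finally I would run the \emph{propagation of the constraints} argument: introduce the zero-quantities measuring the failure of the genuine conformal Einstein equations (and of the matter field equations, Bianchi-type identities, the definition of $L_{\mu\nu}$ via $\nabla\Xi$, the relation between $d_{\mu\nu\lambda\rho}$ and the Weyl tensor, and the gauge conditions), show using the contracted Bianchi identities and the structure of the matter model that these zero-quantities themselves satisfy a homogeneous closed linear wave system, verify that they vanish on $\mathcal{S}_\star$ (because the constraints were solved there) and on $\mathscr{I}_{\tau_\bullet}$ (because the supplied fields satisfy the conformal Einstein constraints there and the boundary data were chosen compatibly), and conclude by uniqueness for that subsidiary IBVP that they vanish throughout the neighbourhood. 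Case (i)--(iii) then each need a short separate check that the matter subsidiary system closes --- this is where the tracefree condition on the energy-momentum tensor is used decisively. I expect the main obstacle to be precisely the boundary analysis: ensuring that the chosen boundary conditions for \emph{both} the conformal-gravitational sector and the matter sector are simultaneously admissible for the quoted wave-equation IBVP theorems and, at the same time, are consistent with (do not over- or under-determine) the subsidiary system governing the constraints on the timelike boundary, so that the propagation-of-constraints step actually goes through up to the corner. The interior existence, by contrast, is essentially a citation, and the constraint propagation in the bulk is a (lengthy but routine) computation with the Bianchi identities.
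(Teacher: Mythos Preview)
Your proposal is correct and follows essentially the same route as the paper: reduce to a quasilinear wave system via the conformal and generalised-harmonic gauges, invoke \cite{CheWah83,DafHru85} for local existence of the IBVP, and then propagate the constraints through a homogeneous subsidiary wave system for the zero-quantities, treating each matter model separately. Two small points of precision: the conformal boundary here is genuinely timelike (not characteristic), and the matter boundary conditions are not of reflective or Robin type but rather prescriptions of specific components (e.g.\ $\phi$ and its normal derivative for the scalar; the spatial parts $f_i,f^*_i$ of the electric/magnetic decompositions for Maxwell), with the remaining components determined by solving auxiliary symmetric-hyperbolic systems \emph{intrinsic to} $\mathscr{I}$ --- this intermediate step is part of what makes the boundary analysis delicate, as you correctly anticipate.
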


\noindent The proof of this theorem follows from the analysis of the various
sections of the article. The detailed boundary conditions for the geometric
fields are detailed in \Cref{Proposition:SummaryBoundaryData} while those for
the matter fields are detailed in Lemmas \ref{Lemma:ScalarField},
\ref{Lemma:MaxwellField} and \ref{Lemma:YM}. To the best of our knowledge, the
results in \Cref{Theorem:MainTracefree} are the first results regarding the
local existence of non-vacuum anti-de Sitter-like spacetimes in the absence of
symmetries available in the literature. 

\subsection{Outline of the article}

This article builds on the theory developed in \cite{CarVal18b} and
\cite{CarHurVal19}. In order to ease the presentation we make direct
use of the relevant results of these references and refer the reader
to them for full details. Accordingly, in this article we emphasise the novel aspects
of the analysis and how the ideas of the above references fit together.

\medskip

In \Cref{Section:TMCEFE} the basic definitions related to the conformal
formulation of the Einstein equations used throughout this work are introduced.
\Cref{Section:WE} is devoted to the discussion of the evolution system for the
conformal fields and the role the gauge choice plays in its successful
recasting as a hyperbolic system; also, the set of zero-quantities are defined.
\Cref{Section:ConformalEinsteinConstraints} presents some results about the
conformal constraint equations on spacelike and timelike hypersurfaces. In
\Cref{Section:Setup} we focus on analysing the boundary data required to
establish a well-posed problem for the system of wave equations and their
relation to the zero-quantities. In \Cref{Section:PropOfConstr} we provide a
brief discussion of how the boundary data enables us to propagate the
constraints on the conformal boundary. In \Cref{Section:MatterModels} several
explicit matter models are studied in detail, focusing on the basic data
necessary to couple them to the evolution system. Finally,
\Cref{Section:FinalRemarks} provides some final remarks.

\subsection*{Conventions}
Throughout this work, $(\tilde{\mathcal{M}},\tilde{\bmg})$ will denote a
4-dimensional Lorentzian spacetime satisfying the Einstein equations with
Cosmological constant $\lambda$.  The signature of the metric in this article
will be $(-,+,+,+)$.  Lowercase Latin letters $a,\, b,\, c, \ldots$ are used as
abstract spacetime tensor indices while the indices $i,\,j,\,k,\ldots$ are
abstract indices on the tensor bundle of hypersurfaces of
$\tilde{\mathcal{M}}$. Greek letters $\alpha,\beta,\gamma,\dots$ will be used
as coordinate indices on either spacelike or timelike hypersurfaces. Our
conventions for the curvature are
\[
\nabla_c \nabla_d u^a -\nabla_d \nabla_c u^a = R^a{}_{bcd} u^b.
\]

\section{The tracefree metric conformal Einstein field equations}
\label{Section:TMCEFE}

This section discusses some properties of the tracefree conformal Einstein
field equations, the basic tool to be used in the remainder of this article.
The system of geometric wave equations resulting from the conformal Einstein
field equations is briefly analysed. Finally, it is also discussed how a
suitable gauge choice enables us to cast these evolution equations as a
quasilinear hyperbolic system.

\subsection{Basic definitions}

Let $(\tilde{\mathcal{M}},\tilde{\bmg})$ denote a physical spacetime
satisfying the Einstein field equations
\begin{equation}
\tilde{R}_{ab} = \lambda \tilde{g}_{ab} + (\tilde{T}_{ab} - \tfrac12\tilde{g}_{ab}\tilde{T}),
\label{EFE}
\end{equation}
where $\tilde{R}_{ab}$ is the Ricci tensor of the metric $\tilde{g}_{ab}$,
$\tilde{T}_{ab}$ the energy-momentum tensor and $\tilde{T} \equiv
\tilde{g}_{ab}\tilde{T}^{ab}$ its trace. Let $(\mathcal{M},\bmg)$ be a
spacetime conformally related to $(\tilde{\mathcal{M}},\tilde{\bmg})$ via a
conformal embedding
\[
\tilde{\mathcal{M}}\stackrel{\varphi}{\hookrightarrow}\mathcal{M},
\qquad \tilde{g}_{ab} \stackrel{\varphi}{\mapsto} g_{ab}\equiv
\Xi^2 \big( \varphi^{-1})^*\tilde{g}_{ab}, \qquad
\Xi|_{\varphi(\tilde{\mathcal{M}})}>0.
\]
with $\Xi$ the so-called \emph{conformal factor}.  Abusing of the notation we
write
\begin{equation}
g_{ab} = \Xi^2 \tilde{g}_{ab}.
\label{ConformalRescaling}
\end{equation}
We will refer to $(\mathcal{M}, \bmg)$ as the \emph{unphysical spacetime}.
The set of points of $\mathcal{M}$ for which $\Xi$ vanishes define the
\emph{conformal boundary}. We use the notation $\mathscr{I}$ to denote the
parts of the conformal boundary which are a hypersurface of
$\mathcal{M}$.

In what follows let $R^a{}_{bcd}, \ R_{ab}, \ R$ and $C^a{}_{bcd}$ denote,
respectively, the Riemann tensor, the Ricci tensor, the Ricci scalar and the
(conformally invariant) Weyl tensor of the metric $g_{ab}$. For the discussion
of the metric conformal Einstein field equations we conveniently
introduce the fields
\begin{subequations}
\begin{eqnarray}
&& L_{ab}\equiv\tfrac{1}{2}R_{ab}-\tfrac{1}{12}g_{ab}R, \label{SchoutenTensorDefinition} \\
&& s\equiv \tfrac{1}{4}\nabla^{c}\nabla_{c}\Xi+\tfrac{1}{24}R\Xi, \label{FriedrichScalarDefinition}\\
&& d^a{}_{bcd} \equiv \Xi^{-1} C^a{}_{bcd}, \label{RescaledWeylTensorDefinition}
\end{eqnarray}
\end{subequations}
known, respectively, as the Schouten tensor, the Friedrich scalar and the
rescaled Weyl tensor.

\subsection{The energy-momentum tensor}

Unlike the various geometrical objects associated to the metric
$\tilde{g}_{ab}$, the rescaling \eqref{ConformalRescaling} does not determine
the transformation rule for the energy-momentum tensor. Then, for simplicity,
let define its unphysical counterpart $T_{ab}$ as
\begin{equation}
T_{ab} \equiv \Xi^{-2}\tilde{T}_{ab}.
\label{EnergyMomentumRescaling}
\end{equation}
In terms of the Levi-Civita connection of the unphysical metric $\nabla_a$,
this rescaling implies that
\[
\nabla^a T_{ab} = \Xi^{-1}T \nabla_b\Xi,
\]
where it has been used that $\tilde{T}_{ab}$ is divergencefree and $T \equiv
g_{ab}T^{ab}$. From here we have that 
\[
\nabla^a T_{ab} = 0 \quad \Longleftrightarrow \quad T = 0.
\]
\begin{remark}
{\em From equation \eqref{EnergyMomentumRescaling} it can be
checked that $T= 0$ if $\tilde{T} = 0$. In view of this it will be assumed
hereafter that the physical matter model is tracefree.}
\end{remark}

Due to the presence of a non-vanishing tracefree matter component it results
convenient to introduce the rescaled Cotton tensor
\begin{equation}
T_{abc} \equiv \Xi\nabla_{[a}T_{b]c} - g_{c[a}T_{b]d}\nabla^d\Xi - 3T_{c[a}\nabla_{b]}\Xi.
\label{CottonTensorDefinition}
\end{equation}
From this definition it follows that it possesses the symmetries $T_{abc} =
T_{[ab]c}$ and $T_{[abc]} = 0$.

\subsection{Basic properties}

\medskip
In terms of the objects defined above, the \emph{tracefree metric conformal
Einstein field equations} are given by \cite{Fri91}:
\begin{subequations}
\begin{eqnarray}
&& \nabla_a \nabla_b \Xi =-\Xi L_{ab} + sg_{ab} + \tfrac12\Xi^3T_{ab}, \label{TraceCFE1}\\
&& \nabla_a s = -L_{ac} \nabla^c \Xi + \tfrac12\Xi^2T_{ab}\nabla^b\Xi, \label{TraceCFE2}\\
&& \nabla_a L_{bc} - \nabla_b L_{ac} = \Xi T_{abc} + \nabla_e \Xi d^e{}_{cab}, \label{TraceCFE3}\\
&& \nabla_e d^e{}_{cab} = T_{abc},  \label{TraceCFE4}\\
&& \lambda  = 6 \Xi s - 3 \nabla_c \Xi \nabla^c \Xi, \label{TraceCFE5} \\
&& R^a{}_{bcd} = \Xi d^a{}_{bcd} +2 \delta_{[c}{}^a L_{d]b} + 2 L_{[c}{}^a g_{d]b}. \label{TraceCFE6}
\end{eqnarray}
\end{subequations}
A detailed derivation and discussion of this system of equations can be found in
$\cite{CFEBook}$. Additionally, as a consequence of imposing the tracefree
condition on the matter sector, the system is supplemented by the
conservation law
\begin{equation}
\nabla^aT_{ab} = 0. \label{UnphysicalConservationLaw}
\end{equation}

\begin{remark}
{\em Equations \eqref{TraceCFE1}-\eqref{TraceCFE4} are read as differential
conditions for the fields $\Xi, \ s, \ L_{ab}$ and $d^a{}_{bcd}$ subject to the
condition $\eqref{UnphysicalConservationLaw}$. Equation \eqref{TraceCFE5} plays
the role of an algebraic constraint which is satisfied if it holds at a single
point by virtue of the other equations ---see Lemma 8.1 in \cite{CFEBook}.
Equation \eqref{TraceCFE6} establishes the irreducible decomposition of the
Riemann tensor. Moreover,  it also provides an equation for
$g_{ab}$.}
\end{remark}

\begin{remark}
{\em Direct evaluation of \eqref{TraceCFE5} on the conformal boundary shows
that the causal character of $\mathscr{I}$ is completely determined by the sign
of the Cosmological constant $\lambda$. In particular, for anti de Sitter-like
spacetimes ($\lambda < 0$) $\mathscr{I}$ is timelike ---see
e.g. Theorem 10.1 in \cite{CFEBook}.}
\end{remark}

By a solution to the tracefree metric conformal Einstein field equations it is
understood a collection $(g_{ab}, \ \Xi, \ s, \ L_{ab}, \ d^a{}_{bcd}, \
T_{ab})$ satisfying equations \eqref{TraceCFE1}-\eqref{TraceCFE5} and
\eqref{UnphysicalConservationLaw}.  The relation between the metric conformal
Einstein field equations and the Einstein field equations is given by the
following statement:
\begin{proposition}
\label{Proposition:FriedrichProposition}
Let  $(g_{ab}, \ \Xi, \ s, \ L_{ab}, \ d^a{}_{bcd},\ T_{ab})$ denote a solution
to the metric tracefree conformal Einstein field equations such that $\Xi\neq
0$ on an open set $\mathcal{U}\subset \mathcal{M}$. If, in addition, equation
\eqref{TraceCFE5} is satisfied at a point $p\in \mathcal{U}$, then the metric
\[
\tilde{g}_{ab} = \Xi^{-2} g_{ab}
\]
is a solution to the Einstein field equations \eqref{EFE} on $\mathcal{U}$.
\end{proposition}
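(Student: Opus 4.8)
The plan is to reconstruct the physical Einstein equations \eqref{EFE} for $\tilde{g}_{ab}=\Xi^{-2}g_{ab}$ --- which is well defined on $\mathcal{U}$ precisely because $\Xi\neq 0$ there --- by combining the conformal transformation law for the Schouten tensor with equations \eqref{TraceCFE1} and \eqref{TraceCFE5}. Before doing so I would upgrade \eqref{TraceCFE5} from holding at the single point $p$ to holding on all of $\mathcal{U}$ (or, strictly, on the connected component of $\mathcal{U}$ containing $p$). This is achieved by setting $\Theta\equiv\lambda-6\Xi s+3\nabla_c\Xi\nabla^c\Xi$ and showing $\nabla_a\Theta=0$: substituting \eqref{TraceCFE2} for $\nabla_a s$ and \eqref{TraceCFE1} for $\nabla_a\nabla_c\Xi$, the contributions proportional to $\Xi L_{ac}\nabla^c\Xi$, to $s\nabla_a\Xi$ and to $\Xi^3 T_{ac}\nabla^c\Xi$ cancel in pairs. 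Hence $\Theta$ is constant along $\mathcal{U}$ and, since $\Theta(p)=0$, equation \eqref{TraceCFE5} holds throughout. This is the content of Lemma~8.1 in \cite{CFEBook} and can simply be invoked.

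Then I would use that, under $g_{ab}=\Xi^2\tilde{g}_{ab}$, the Schouten tensors of the two metrics in dimension four satisfy
\[
\tilde{L}_{ab}=L_{ab}+\Xi^{-1}\nabla_a\nabla_b\Xi-\tfrac12\Xi^{-2}\big(\nabla_c\Xi\nabla^c\Xi\big)g_{ab}.
\]
Inserting \eqref{TraceCFE1} for $\nabla_a\nabla_b\Xi$ and \eqref{TraceCFE5} for $\nabla_c\Xi\nabla^c\Xi$, the terms $L_{ab}$ and $\Xi^{-1}sg_{ab}$ cancel and one is left with $\tilde{L}_{ab}=\tfrac12\Xi^2 T_{ab}+\tfrac16\lambda\tilde{g}_{ab}$. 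Since the matter model is tracefree we have $\tilde{T}=0$, hence $T=0$ by the remark after \eqref{UnphysicalConservationLaw}; tracing the previous identity with $\tilde{g}^{ab}$ then gives $\tilde{R}=4\lambda$. Finally, using $\tilde{R}_{ab}=2\tilde{L}_{ab}+\tfrac16\tilde{R}\tilde{g}_{ab}$ (the definition of $\tilde{L}_{ab}$) together with $\Xi^2 T_{ab}=\tilde{T}_{ab}$, one obtains $\tilde{R}_{ab}=\lambda\tilde{g}_{ab}+\tilde{T}_{ab}$, which, since $\tilde{T}=0$, is exactly \eqref{EFE}.

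I expect the only real care to be needed in the bookkeeping: fixing the signs and numerical factors in the conformal transformation of $L_{ab}$ for the signature $(-,+,+,+)$ and $n=4$ used here, and keeping track of whether indices are raised with $g_{ab}$ or $\tilde{g}_{ab}$ when passing between $T$, $T_{ab}$ and $\tilde{T}_{ab}$. It is worth noting that equations \eqref{TraceCFE3}, \eqref{TraceCFE4}, \eqref{TraceCFE6} and the conservation law \eqref{UnphysicalConservationLaw} play no role in this particular implication --- they are only needed for the full consistency of the conformal system --- so the argument reduces to the constancy of $\Theta$ together with the two substitutions above.
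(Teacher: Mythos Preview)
Your argument is correct and is precisely the standard route: propagate \eqref{TraceCFE5} to all of $\mathcal{U}$ via $\nabla_a\Theta=0$ (using \eqref{TraceCFE1}--\eqref{TraceCFE2}), then feed \eqref{TraceCFE1} and \eqref{TraceCFE5} into the conformal transformation law for the Schouten tensor to recover $\tilde{L}_{ab}=\tfrac12\Xi^2 T_{ab}+\tfrac16\lambda\tilde{g}_{ab}$, and finally unpack $\tilde{R}_{ab}$. The paper does not actually supply its own proof of this proposition --- it simply refers the reader to Proposition~8.1 of \cite{CFEBook} --- so there is nothing to compare against beyond noting that your outline is exactly the argument one finds there. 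Your closing observation that \eqref{TraceCFE3}, \eqref{TraceCFE4}, \eqref{TraceCFE6} and \eqref{UnphysicalConservationLaw} are not needed for this implication is also correct.
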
 A proof of this proposition is given in \cite{CFEBook} 
---see Proposition 8.1. 

\section{The evolution system}
\label{Section:WE}

In \cite{Pae15} Paetz has proved that if $T_{ab} = 0$ it is possible to obtain a
system of \emph{geometric} wave equations for the collection of conformal
fields $(\Xi,\ s,\ L_{ab}, \ d^a{}_{bcd})$. This has been generalised to the
tracefree matter case in \cite{CarHurVal19}, resulting in the following:
\begin{proposition}
\label{Proposition:CWE}
Any solution $(\Xi, \ s, \ L_{ab}, \ d^a{}_{bcd})$ to the tracefree metric
conformal Einstein field equations \eqref{TraceCFE1}-\eqref{TraceCFE4}
satisfies the equations
\begin{subequations}
\begin{eqnarray}
&& \square\Xi = 4s - \tfrac16 R, \label{CWE1} \\
&& \square s = - \tfrac{1}{6} s R + \Xi L_{ab} L^{ab} - \tfrac{1}{6} \nabla_{a}R \nabla^{a}\Xi
+ \tfrac{1}{4} \Xi^5 T_{ab} T^{ab} - \Xi^3 L_{ab}T^{ab} 
+ \Xi \nabla^{a}\Xi \nabla^{b}\Xi T_{ab}, \label{CWE2} \\
&& \square L_{ab} = -2 \Xi d_{acbd} L^{cd} + 4 L_{a}{}^{c} L_{bc} - L_{cd} L^{cd} g_{ab} 
+ \tfrac{1}{6} \nabla_{a}\nabla_{b}R + \tfrac{1}{2} \Xi^3 d_{acbd} T^{cd} \nonumber \\
&& \hspace{1.3cm} - \Xi \nabla_{c}T_{a}{}^{c}{}_{b} - 2 T_{(a|c|b)} \nabla^{c}\Xi, \label{CWE3} \\
&& \square d_{abcd} = - 4 \Xi d_{a}{}^{f}{}_{[c}{}^{e} d_{d]ebf} - 2 \Xi d_{a}{}^{f}{}_{b}{}^{e} d_{cdfe} 
+ \tfrac{1}{2} d_{abcd} R - T_{[a}{}^f \Xi^2 d_{b]fcd} - \Xi^2 T_{[c}{}^f d_{d]fab} \nonumber \\
&& \hspace{1.5cm} - \Xi^2 g_{a[c} d_{d]gbf} T^{fg} + \Xi^2 g_{b[c}  d_{d]gaf} T^{fg} 
+ 2 \nabla_{[a}T_{|cd|b]} + \epsilon_{abef} \nabla^{f}\,^*T_{cd}{}^{e}. \label{CWE4}
\end{eqnarray}
\end{subequations}
\end{proposition}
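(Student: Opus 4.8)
The plan is to establish each of \eqref{CWE1}-\eqref{CWE4} by applying a suitably chosen second-order differential operator --- a trace, a divergence, or the combination prescribed by the (contracted) second Bianchi identity --- to one of the first-order conformal field equations \eqref{TraceCFE1}-\eqref{TraceCFE4}, and then closing the resulting expression by repeatedly substituting the remaining conformal field equations, the conservation law \eqref{UnphysicalConservationLaw}, the curvature decomposition \eqref{TraceCFE6}, and the tracefree condition $T=0$, while commuting covariant derivatives according to the curvature convention fixed in the introduction. Conceptually this is the matter generalisation of Paetz's vacuum computation \cite{Pae15}; since the derivation amounts to a long bookkeeping of terms, it is natural to carry it out with the help of \texttt{xAct} \cite{xAct}, as is done in \cite{CarHurVal19}.

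The two scalar equations are the most direct. Equation \eqref{CWE1} follows by taking the $g$-trace of \eqref{TraceCFE1} and using $g^{ab}L_{ab}=\tfrac16 R$ (immediate from \eqref{SchoutenTensorDefinition} in four dimensions) together with $T=0$. For \eqref{CWE2} I would take the divergence of \eqref{TraceCFE2}: the term $\nabla^a L_{ac}$ is handled by contracting \eqref{TraceCFE3}, whose $d$-contribution drops by the Weyl symmetries of $d^a{}_{bcd}$ and which reduces, after use of \eqref{UnphysicalConservationLaw} and $T=0$, to the Bianchi identity $\nabla^a L_{ac}=\tfrac16\nabla_c R$; the Hessians $\nabla_a\nabla_b\Xi$ occurring are eliminated with \eqref{TraceCFE1}, and $\nabla^a T_{ab}=0$ removes the divergence of $T_{ab}$. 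Collecting terms gives \eqref{CWE2}.

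The equation for $L_{ab}$ requires more work. Applying $\nabla^a$ to \eqref{TraceCFE3}, the left-hand side becomes $\square L_{bc}-\nabla^a\nabla_b L_{ac}$; commuting $\nabla^a\nabla_b L_{ac}=\nabla_b\nabla^a L_{ac}+[\nabla^a,\nabla_b]L_{ac}$ yields the $\tfrac16\nabla_a\nabla_b R$ term (through the contracted \eqref{TraceCFE3}) together with Ricci-times-Schouten and Riemann-times-Schouten terms that \eqref{TraceCFE6} converts into $-2\Xi d_{acbd}L^{cd}+4L_a{}^cL_{bc}-L_{cd}L^{cd}g_{ab}$. On the right-hand side, $\nabla^a(\nabla_e\Xi\,d^e{}_{cab})$ is expanded with \eqref{TraceCFE1} for the Hessian of $\Xi$ and \eqref{TraceCFE4} for the divergence of $d$, while $\nabla^a(\Xi T_{abc})$ needs the divergence of the rescaled Cotton tensor, computed directly from \eqref{CottonTensorDefinition} using \eqref{TraceCFE1} and \eqref{UnphysicalConservationLaw}. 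Assembling everything produces \eqref{CWE3}.

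The main obstacle is \eqref{CWE4}. Since \eqref{TraceCFE4} is itself a first-order (divergence) equation for $d^a{}_{bcd}$, a genuinely new input is required: the ``second half'' of the Bianchi system, namely the identity for $\nabla_{[e}d_{ab]cd}$ obtained by feeding \eqref{TraceCFE6} into the second Bianchi identity $\nabla_{[e}R_{ab]cd}=0$ and simplifying with \eqref{TraceCFE3}. Using the algebraic (Weyl) symmetries of $d$ one expresses $\square d_{abcd}$ as a linear combination of $\nabla^e\nabla_{[e}d_{ab]cd}$ and of divergence-type terms $\nabla_{[a}(\nabla^{e}d_{b]ecd})$; the latter collapse, via \eqref{TraceCFE4} and its left Hodge dual, to the matter contributions $2\nabla_{[a}T_{|cd|b]}$ and $\epsilon_{abef}\nabla^{f}\,{}^{*}T_{cd}{}^{e}$, while the former, after one further commutation of derivatives, supply the curvature-quadratic terms that \eqref{TraceCFE6} turns into the terms quadratic in $d^a{}_{bcd}$ (each carrying a factor of $\Xi$), the $\Xi^2\,T\cdot d$ and $\Xi^2 g\cdot d\cdot T$ terms, and the $\tfrac12 d_{abcd}R$ term. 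The difficulty here is purely combinatorial --- keeping the index symmetries, the many commutator contributions, and the $\epsilon$-contractions under control --- which is precisely what the \texttt{xAct} computation automates. I note finally that neither \eqref{TraceCFE5} nor the value of $\lambda$ enters any of these derivations, consistently with \eqref{TraceCFE5} being a relation that merely propagates.
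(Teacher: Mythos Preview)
Your proposal is correct and follows precisely the standard route --- trace of \eqref{TraceCFE1} for \eqref{CWE1}, divergence of \eqref{TraceCFE2} for \eqref{CWE2}, divergence of \eqref{TraceCFE3} for \eqref{CWE3}, and the combination of \eqref{TraceCFE4} with the Bianchi-type identity for $\nabla_{[e}d_{ab]cd}$ (obtained from \eqref{TraceCFE6} and the second Bianchi identity) for \eqref{CWE4} --- which is exactly the matter generalisation of Paetz's vacuum argument. The paper itself does not supply an in-text proof of this proposition but simply quotes the result from \cite{CarHurVal19}, where the derivation you outline is carried out (with \texttt{xAct}, as you anticipate); so your sketch is in fact more explicit than what appears here.
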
 

\begin{remark}
{\em The system \eqref{CWE1}-\eqref{CWE4} needs to be supplemented with a wave
equation for the unphysical metric. Taking the trace of \eqref{TraceCFE6} 
we obtain
\begin{equation}
R_{ab} = 2 L_{ab} + \frac{1}{6}Rg_{ab}.
\label{EquationMetric}
\end{equation}
Here, $R_{ab}$ encodes the derivatives of $g_{ab}$ once a system of coordinates
has been adopted, while $L_{ab}$ is an independent field satisfying the system
\eqref{CWE1}-\eqref{CWE4}. The last equation can be viewed as an Einstein field
equation for $g_{ab}$ coupled to some unphysical matter fields. In view of
this, equation \eqref{EquationMetric} be known as the \emph{unphysical Einstein
equation}.}
\end{remark}

\begin{remark}
\label{ProblemsCWE}
{\em Equations \eqref{CWE1}-\eqref{CWE4} and \eqref{EquationMetric} are not
completely satisfactory wave equations for a number of reasons: (i) the
appearance of second order derivatives of $R$; (ii) once coordinates are
introduced, undesired second order derivatives of the components of $g_{ab}$
appear in the principal part of equations \eqref{CWE3}, \eqref{CWE4} and
\eqref{EquationMetric}; (iii) when a particular matter model is considered,
second (and higher) order derivatives of the corresponding field will emerge in
the coupled system.  In order to apply results from the theory of partial
differential equations, these problems must be addressed. Issues (i) and (ii)
have been discussed in \cite{CarVal18b} and will be briefly presented in the
next subsection, while (iii) has been analysed in \cite{CarHurVal19} for some
particular models.}
\end{remark}

\subsection{Gauge considerations}
\label{Section:Gauge}

The conformal Einstein field equations possess both a coordinate and a
conformal freedom which can be exploited to recast the geometric wave equations
\eqref{CWE1}-\eqref{CWE4} as satisfactory hyperbolic evolution equations. This
subsection is based on \cite{CarVal18b}, Section 2.3, where a more detailed
presentation and the motivation behind the relevant definitions can be
found.

\subsubsection{Conformal gauge source function}

Given two conformally related metrics $g_{ab}$ and $g'_{ab}$, the
relation between their corresponding Ricci scalars can be read as a
differential equation for their conformal factor $\theta$. Then,
prescribing the Ricci scalar is equivalent to the specification of the
representative of the conformal class $[\tilde{\bmg}]$.  Accordingly,
the Ricci scalar associated to the unphysical metric $g_{ab}$ will be
specified by the gauge function $\mathcal{R}(x)$ ---that is
\[
R = \mathcal{R}(x).
\]

\subsubsection{Generalised harmonic coordinates and the reduced Ricci operator}

The other piece of the gauge freedom is contained in the coordinate choice.
Let $x = (x^\mu)$ be coordinates satisfying the \emph{generalised wave}
condition
\begin{equation}
\Box x^\mu = - \mathcal{F}^\mu(x),
\label{GeneralisedWaveCoordinateCondition}
\end{equation}
where $\mathcal{F}^\mu(x)$ are the so-called \emph{coordinate gauge source
functions}. A simple
calculation shows that these are related to the \emph{contracted Christoffel
symbols} $\Gamma^\mu \equiv g^{\sigma\tau}\Gamma^\mu{}_{\sigma\tau}$ via
\[
\mathcal{F}^\mu(x) = -\Gamma^\mu.
\]

In this context, we introduced the \emph{reduced Ricci operator}
\begin{equation}
\mathscr{R}_{\mu\nu}[\bmg] \equiv R_{\mu\nu} - g_{\sigma(\mu}\nabla_{\nu)}\Gamma^\sigma +
g_{\sigma(\mu}\nabla_{\nu)} \mathcal{F}^\sigma(x).
\end{equation}
Here, $R_{\mu\nu}$ is given explicitly in terms of first and second order
partial derivatives of $g_{\mu\nu}$.  Thus, by choosing coordinates satisfying
condition \eqref{GeneralisedWaveCoordinateCondition}, the unphysical Einstein
equations \eqref{EquationMetric} takes the form
\begin{equation}
\mathscr{R}_{\mu\nu}[\bmg] = 2 L_{\mu\nu} +\frac{1}{6}\mathcal{R}(x) g_{\mu\nu}.
\label{UnphysicalEinsteinEquation}
\end{equation}
By choosing generalised wave coordinates, this equation does correspond to a
quasilinear wave equation for the components of $g_{ab}$.

\subsubsection{The reduced wave operator}

As remarked before, when the operator $\square$ for an arbitrary coordinates
choice is applied on a tensorial field, undesired second derivatives of
$g_{\mu\nu}$ appear. In this sense this operator is not suitable if one wants
to preserve the hyperbolicity of the system. Making use of the generalised wave
coordinate condition, a suitable operator can be defined as follows:

\begin{definition}
Let $T_{\lambda\cdots\rho}$ be a tensor field. The action of the \emph{reduced
wave operator}, $\blacksquare$, on it is given by:
\begin{eqnarray*}
&& \blacksquare T_{\lambda \cdots \rho} \equiv \square T_{\lambda\cdots\rho} +\bigg( (2 \Phi_{\tau\lambda} 
+ \frac{1}{4}\mathcal{R}(x) g_{\tau\lambda} - R_{\tau\lambda}) 
- g_{\sigma\tau}\nabla_\lambda (\mathcal{F}^\sigma(x)-\Gamma^\sigma) \bigg)T^\tau{}_{\cdots\rho} + \cdots\\
&& \hspace{3cm} \cdots + \bigg( (2 \Phi_{\tau\rho} + \frac{1}{4}\mathcal{R}(x)
g_{\tau\rho} - R_{\tau\rho}) - g_{\sigma\tau}\nabla_\rho (\mathcal{F}^\sigma(x)-\Gamma^\sigma)
\bigg)T_{\lambda\cdots}{}^\tau
\end{eqnarray*}
where $\square \equiv g^{\mu\nu}\nabla_\mu\nabla_\nu$. The action of
$\blacksquare$ on a scalar field $\phi$ is simply given by
\[
\blacksquare \phi \equiv g^{\mu\nu}\nabla_\mu\nabla_\nu \phi. 
\]
\end{definition}
\noindent As shown in \cite{CarVal18b}, if generalised wave coordinates are chosen then
$\blacksquare$ represents a suitable second order hyperbolic operator for the
system \eqref{CWE1}-\eqref{CWE4}. In this sense, we will say that a system of
wave equations expressed in terms of $\blacksquare$ is \emph{proper}.

\subsubsection{Summary: gauge reduced evolution equations}

In view of the previous discussion, and as presented in \cite{CarHurVal19},
when generalised wave coordinates are adopted, the system of wave equations for
the components of the conformal geometric fields  to be considered is the following:
\begin{subequations}
\begin{eqnarray}
&&\hspace{-13mm} \blacksquare \Xi = 4s -\frac{1}{6}\Xi \mathcal{R}(x), \label{ReducedWaveCFE1} \\
&&\hspace{-13mm} \blacksquare s = - \tfrac{1}{6} s \mathcal{R}(x) + \Xi L_{\mu\nu} L^{\mu\nu} 
- \tfrac{1}{6} \nabla_{\mu}\mathcal{R}(x)
\nabla^{\mu}\Xi + \tfrac{1}{4} \Xi^5 T_{\mu\nu}T^{\mu\nu} - \Xi^3 L_{\mu\nu}T^{\mu\nu} 
+ \Xi \nabla^{\mu}\Xi \nabla^{\nu}\Xi T_{\mu\nu},
\label{ReducedWaveCFE2} \\
&&\hspace{-13mm} \blacksquare L_{\mu\nu} = -2 \Xi d_{\mu\rho\nu\lambda} L^{\rho\lambda} 
+ 4 L_{\mu}{}^{\lambda} L_{\nu\lambda} - L_{\lambda\rho} L^{\lambda\rho} g_{\mu\nu} 
+ \tfrac{1}{6} \nabla_{\mu}\nabla_{\nu}\mathcal{R}(x) + \tfrac{1}{2} \Xi^3 d_{\mu\lambda\nu\rho} T^{\lambda\rho} 
\nonumber \\
&& \hspace{0.1cm} - \Xi \nabla_{\lambda}T_{\mu}{}^{\lambda}{}_{\nu} 
- 2 T_{(\mu|\lambda|\nu)} \nabla^{\lambda}\Xi,\label{ReducedWaveCFE3} \\
&&\hspace{-13mm} \blacksquare d_{\mu\nu\lambda\rho} = - 4 \Xi
d_{\mu}{}^{\tau}{}_{[\lambda}{}^{\sigma} d_{\rho]\sigma\nu\tau} - 2 \Xi
d_{\mu}{}^{\tau}{}_{\nu}{}^{\sigma} d_{\lambda\rho\tau\sigma} + \tfrac{1}{2}
d_{\mu\nu\lambda\rho} \mathcal{R}(x) - T_{[\mu}{}^\sigma \Xi^2 d_{\nu]\sigma\lambda\rho} -
\Xi^2 T_{[\lambda}{}^\sigma d_{\rho]\sigma\mu\nu} \nonumber \\
&& \hspace{0.3cm} - \Xi^2 g_{\mu[\lambda} d_{\rho]\sigma\nu\tau} T^{\tau\sigma} 
+ \Xi^2 g_{\nu[\lambda}  d_{\rho]\sigma\mu\tau} T^{\tau\sigma} 
+ 2 \nabla_{[\mu}T_{|\lambda\rho|\nu]} + \epsilon_{\mu\nu\sigma\tau} \nabla^{\tau}\,^*T_{\lambda\rho}{}^{\sigma},
\label{ReducedWaveCFE4}\\
&& \hspace{-13mm}\mathscr{R}_{\mu\nu}[\bmg] = 2 L_{\mu\nu}
   +\frac{1}{6}\mathcal{R}(x) g_{\mu\nu}. \label{ReducedWaveCFE5}
\end{eqnarray}
\end{subequations}

\begin{remark}
{\em Since the operator $\blacksquare$ enables us to remove all the undesired
second order partial derivatives of $g_{\mu\nu}$, the reduced system
\eqref{ReducedWaveCFE1}-\eqref{ReducedWaveCFE5} constitutes a system of
quasilinear wave equations for the fields $\Xi$, $s$, $L_{\mu\nu}$,
$d_{\mu\nu\lambda\rho}$ and $g_{\mu\nu}$. Strictly speaking though, this is a
system of wave equations only if $g_{\mu\nu}$ is known to be Lorentzian.
Precise statements concerning the local existence and uniqueness results for
this type of systems can be found in \cite{CheWah83, DafHru85}.}
\end{remark}

\begin{remark}
{\em One has to prove that the gauge introduced in this section is consistent
with the field equations. In \cite{CarHurVal19} it has been shown how to
construct a system of homogeneous wave equations for an additional set of
subsidiary fields under the assumption of a tracefree energy-momentum tensor.
The conditions required to establish vanishing initial and boundary data for
these fields have been described in \cite{CarVal18b}, from where the
propagation of the gauge follows.}
\end{remark}

\subsection{The subsidiary evolution equations}

Once a suitable system of wave equations for the conformal fields has been
obtained, it is necessary to prove that any solution to this system corresponds
to a solution to the metric tracefree conformal Einstein field equations
\eqref{TraceCFE1}-\eqref{TraceCFE4}. This can be achieved by first defining the
set of \emph{geometric zero-quantities}
\begin{subequations}
\begin{eqnarray}
&& \Upsilon_{ab} \equiv \nabla_a \nabla_b \Xi + \Xi L_{ab} - sg_{ab} - \frac13\Xi^3 T_{ab}, \label{ZQ1}\\
&& \Theta_a \equiv \nabla_a s + L_{ac} \nabla^c \Xi - \frac12\Xi^2 T_{ab} \nabla^b\Xi, \label{ZQ2}\\
&& \Delta_{abc} \equiv \nabla_a L_{bc} - \nabla_b L_{ac} - \nabla_d \Xi d^d{}_{cab}
- \Xi T_{abc}, \label{ZQ3}\\
&& \Lambda_{abc}\equiv T_{bca} - \nabla_e d^e{}_{abc}, \label{ZQ4} \\
&& Z \equiv \lambda - 6\Xi s +3 \nabla^c\Xi\nabla_c\Xi, \label{ZQ5} \\
&& P^a{}_{bcd} \equiv R^a{}_{bcd} - \Xi d^a{}_{bcd} - 2 \delta_{[c}{}^a L_{d]b} - 2 L_{[c}{}^a g_{d]b}. \label{ZQ6}
\end{eqnarray}
\end{subequations}
In terms of these fields, the metric tracefree conformal Einstein field
equations can be expressed as
\begin{equation}
\Upsilon_{ab}=0, \qquad \Theta_a =0, \qquad \Delta_{abc}=0, \qquad
\Lambda_{abc}=0 \qquad Z = 0, \qquad P^a{}_{bcd} = 0.
\label{CFEZeroQuantities}
\end{equation}
In \cite{CarHurVal19} it has been shown that the evolution of these quantities
can be encoded in a system of homogeneous geometric wave equations. This result
can be enunciated as follows:

\begin{lemma}
\label{Lemma:WEZQ}
Assume that the conformal fields $\Xi$, $s$, $L_{ab}$ and $d_{abcd}$ satisfy
the geometric wave equations \eqref{CWE1}-\eqref{CWE4}. Then the geometric
zero-quantities satisfy a system of geometric wave equations of the form
\begin{subequations}
\begin{eqnarray}
&& \square \Upsilon_{ab} = H_{ab}({\bm\Upsilon}, {\bm\nabla\bm\Upsilon},{\bmP}),
\label{SubsidiaryEquation1}\\
&& \square \Theta_a = H_a({\bm\Upsilon}, {\bm\Theta}, {\bm\Delta}, {\bmP}), \label{SubsidiaryEquation2}\\
&& \square \Delta_{abc} = H_{abc}(\bm\Upsilon, \bm\nabla\bm\Upsilon, \bm\Theta, \bm\Delta,
 \bm\Lambda, \bm\nabla\bm\Lambda, \bmP, \bm\nabla\bmP),
 \label{SubsidiaryEquation3}\\
&& \square \Lambda_{abc} = L_{abc}(\bm\Upsilon, \bm\nabla\bm\Upsilon, \bm\Delta,
\bm\Lambda, \bmP, \bm\nabla\bmP), \label{SubsidiaryEquation4} \\
&& \square Z = H(\bm\Upsilon, \bm\Theta), \label{SubsidiaryEquation5} \\
&& \square P_{abcd} = H_{abcd}(\bm\Upsilon, \bm\nabla\bm\Delta, \bm\Lambda, \bm\nabla\bm\Lambda,
\bmP). \label{SubsidiaryEquation6}
\end{eqnarray}
\end{subequations}
where $H_a$, $H_{ab}$, $H_{abc}$, $L_{abc}$, $H$ and $H_{abcd}$ are homogeneous
expressions of their arguments.
\end{lemma}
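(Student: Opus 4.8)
The plan is to derive each subsidiary wave equation by commuting derivatives and substituting the field equations, exploiting the fact that the zero-quantities measure the failure of the conformal fields to solve \eqref{TraceCFE1}--\eqref{TraceCFE4}. The starting point is the observation that, by hypothesis, $\Xi$, $s$, $L_{ab}$ and $d_{abcd}$ satisfy the \emph{wave} equations \eqref{CWE1}--\eqref{CWE4}, which are themselves obtained by differentiating the first-order system; hence the zero-quantities are not assumed to vanish, and the goal is to show they satisfy a closed homogeneous hyperbolic system. For $\bm\Upsilon$, I would apply $\square = g^{ab}\nabla_a\nabla_b$ to the definition \eqref{ZQ1}, commute the box through the second-derivative term $\nabla_a\nabla_b\Xi$ using the Ricci identity (which introduces curvature terms that, via \eqref{ZQ6}, one trades for $\bmP$ plus allowed $L$-terms), and then use \eqref{CWE1} for $\square\Xi$ and \eqref{CWE3} for $\square L_{ab}$; the $s g_{ab}$ and $\Xi^3 T_{ab}$ pieces are handled with \eqref{CWE2} and the conservation law \eqref{UnphysicalConservationLaw} together with the Cotton tensor definition \eqref{CottonTensorDefinition}. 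Crucially one checks that every term that is not manifestly proportional to a zero-quantity or its derivative cancels --- this cancellation is guaranteed precisely because \eqref{CWE1}--\eqref{CWE4} were engineered as integrability consequences of \eqref{TraceCFE1}--\eqref{TraceCFE4}.

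The same scheme applies to the remaining zero-quantities, in an order that respects the dependency structure displayed in \eqref{SubsidiaryEquation1}--\eqref{SubsidiaryEquation6}: after $\bm\Upsilon$ one treats $\bm\Theta$ (applying $\square$ to \eqref{ZQ2} and using \eqref{CWE2}), then $\bm\Delta$ and $\bm\Lambda$ (which require differentiating \eqref{CWE3} and \eqref{CWE4} respectively and using the second Bianchi identity for $R^a{}_{bcd}$, rewritten through $\bmP$), then the scalar $Z$ from \eqref{ZQ5}, and finally $\bmP$ from \eqref{ZQ6} using \eqref{EquationMetric}. Throughout, I would repeatedly use two bookkeeping facts: (a) any explicit curvature tensor appearing after commutation can be replaced, modulo $\bmP$, by the algebraic expression $\Xi d + \delta L + L g$, so curvature never enters the source as an independent quantity; and (b) derivatives of $T_{ab}$ and $T_{abc}$ are controlled because the matter is tracefree, so \eqref{UnphysicalConservationLaw} holds and the Cotton tensor is divergence-compatible with \eqref{TraceCFE4}. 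The homogeneity claim --- that $H_{ab}, H_a, H_{abc}, L_{abc}, H, H_{abcd}$ vanish when all zero-quantities and their listed derivatives vanish --- then follows by inspection, since no inhomogeneous remainder survives.

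The main obstacle is the sheer volume of algebra in the $\bm\Delta$ and $\bmP$ equations, where commuting $\square$ past first derivatives of $L_{ab}$ and $d_{abcd}$ generates many curvature- and Cotton-type terms that must be regrouped into the schematic right-hand sides; tracking which derivatives of $\bm\Lambda$ and $\bmP$ genuinely appear (hence the $\bm\nabla\bm\Lambda$, $\bm\nabla\bmP$, $\bm\nabla\bm\Delta$ arguments) requires care, and here I would lean on the \texttt{xAct} computations referenced in the paper to confirm the cancellations rather than reproduce them by hand. A secondary subtlety is ensuring that the second Bianchi identity is used consistently with the gauge-reduced setting, so that no second derivatives of $g_{\mu\nu}$ re-enter; since the statement is phrased in terms of the geometric $\square$ (not the reduced $\blacksquare$), this is automatic at the covariant level, and the passage to the reduced operator is deferred to the propagation-of-the-gauge argument already invoked in \cite{CarHurVal19, CarVal18b}. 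Once all six identities are assembled, the Lemma follows immediately.
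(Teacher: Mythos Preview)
Your proposal is correct and matches the standard strategy for deriving subsidiary equations of this type. Note, however, that the paper does not actually prove \Cref{Lemma:WEZQ} in the text: it simply imports the result from \cite{CarHurVal19}, where the detailed computation is carried out (with \texttt{xAct}, as you anticipate). Your sketch --- apply $\square$ to each definition \eqref{ZQ1}--\eqref{ZQ6}, commute derivatives via the Ricci identity, trade every appearance of $R^a{}_{bcd}$ for $\bmP$ plus the algebraic combination $\Xi d + 2\delta L + 2Lg$, substitute the wave equations \eqref{CWE1}--\eqref{CWE4}, and verify that the residual terms are homogeneous in the zero-quantities --- is precisely the procedure underlying that reference, so there is no substantive methodological difference to report.
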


\begin{remark}
{\em In order to cast equations
\eqref{SubsidiaryEquation1}-\eqref{SubsidiaryEquation6} as a proper quasilinear
system of wave equations, the operator $\square$ must be replaced by the
reduced wave operator $\blacksquare$ once generalised wave coordinates have
been chosen.}
\end{remark}

\section{The conformal constraint equations}
\label{Section:ConformalEinsteinConstraints}

This section will be devoted to the discussion of the constraint equations
implied by the metric tracefree conformal Einstein field equations. This
system, first discussed in \cite{Fri84}, will serve to construct the initial
and boundary data sets required to establish a well-posed initial-boundary
problem for the wave equations \eqref{ReducedWaveCFE1}-\eqref{ReducedWaveCFE5}.
A detailed discussion of their derivation and properties in the general matter
case can be found in \cite{CFEBook}, Chapter 11. 

\subsection{Basic definitions}

Let $\mathcal{H}$ denote a (spacelike or timelike) hypersurface of the
unphysical spacetime $(\mathcal{M},\bmg)$ with unit normal vector $n_a$.
Then, we define the norm of $n_a$ as:
\[
\epsilon\equiv n_a n^a,
\]
so that $\epsilon$ take the values 1 or -1 for timelike or spacelike
hypersurfaces $\mathcal{H}$, respectively. The normal vector induces a
decomposition via the projector to $\mathcal{H}$
\[
h_a{}^b \equiv g_a{}^b- \epsilon n_a n^b.
\]
The intrinsic derivative $D_a$ on $\mathcal{H}$ is defined in the
following way: let $f$ be a scalar function and $A_a{}^b$ be a tensor field on
$\mathcal{M}$, then
\begin{eqnarray}
&& D_a f \equiv h_a{}^b \nabla_b f, \nonumber \\
&& D_e A_a{}^b  \equiv h_e{}^f h_a{}^c h_d{}^b \nabla_f A_c{}^d. \nonumber
\end{eqnarray}
Expressions involving higher rank tensors follow an analogous
rule. The derivative in the direction of $n^a$ (simply called the
normal derivative) is given by
\[
D \equiv n^a\nabla_a.
\]
The extrinsic curvature of $\mathcal{H}$ is defined as
\[
K_{ab} \equiv h_a{}^c h_b{}^d \nabla_c n_d.
\]

\medskip

In the following let 
\[
\Sigma, \quad s, \quad h_{ij} \quad L_i, \quad L_{ij}, \quad d_{ij},
\quad d_{ijk}, \quad d_{ijkl} 
\]
denote, respectively, the pull-backs of the following geometric objects
\begin{eqnarray*}
& n^a \nabla_a \Xi, \quad s, \quad g_{ab}, \quad n^c h_a{}^d L_{cd},
  \quad h_a{}^ch_b{}^d L_{cd},& \\
& n^b n^d h_e{}^a h_f{}^cd_{abcd}, \quad n^b h_e{}^a h_f{}^c
h_g{}^d d_{abcd}, \quad h_e{}^ah_f{}^bh_g{}^ch_h{}^d d_{abcd} & 
\end{eqnarray*}
to $\mathcal{H}$. In order to take into account the
contributions from the matter fields, let $\rho, \ j_i, \ T_{ij}, \ J_i, \
J_{ij}$ and $T_{ijk}$ stand, respectively, for the pull-backs of the
projections
\[
n^an^bT_{ab}, \quad n^b h_c{}^a T_{ab},\quad h_c{}^a h_d{}^b T_{ab}, 
\quad n^b n^c h_d{}^a T_{abc}, \quad n^c h_d{}^a h_e{}^b T_{abc}, \quad h_d{}^a h_e{}^b h_f{}^c T_{abc}. 
\]

\begin{remark}
{\em The tensor $h_{ij}$ corresponds to the 3-metric induced by $g_{ab}$ on $\mathcal{H}$
and it will be either Lorentzian if $\mathcal{H}$ is timelike or Riemannian if
it is spacelike.  Similarly, we will denote by $K_{ij}$ the pull-back of
$K_{ab}$ and by $K=h^{ij}K_{ij}$ its trace.}
\end{remark}

\begin{remark}
{\em The fields $d_{ij}$ and $d_{ijk}$ encode, respectively, the
\emph{electric} and \emph{magnetic parts} of the rescaled Weyl tensor
$d_{abcd}$ with respect to the normal $n_a$. It can be verified that
\begin{eqnarray*}
& d_{ij}=d_{ji}, \quad d_i{}^i=0, \quad d_{ijk}=-d_{ikj}, \quad d_{[ijk]}=0, &\\
& d_{ijkl} = 2 \epsilon(h_{i[l} d_{k]j} + h_{j[k}d_{l]i}).
\end{eqnarray*}}
\end{remark}

From this point onwards, the restriction of the conformal factor $\Xi$ to the
conformal boundary will be denoted by $\Omega$ ---that is,
\[
\Omega \equiv \Xi|_{\mathscr{I}}.
\]
Next, we define the Schouten tensor of the intrinsic 3-metric
$h_{ij}$, 
\[
l_{ij} \equiv r_{ij} -\frac{1}{4}r h_{ij},
\]
where $r_{ij}$ and $r$ are the corresponding intrinsic Ricci tensor and scalar.
In terms of these fields, a number of constraints are obtained from the
different projections of equations \eqref{TraceCFE1}-\eqref{TraceCFE5} to
$\mathcal{H}$. This results in the intrinsic equations 
\begin{subequations}
\begin{eqnarray}
&& D_i D_j \Omega = -\epsilon \Sigma K_{ij} -\Omega L_{ij} + s
h_{ij} + \frac12\Omega^3 T_{ij}, \label{ConformalConstraint1} \\
&& D_i \Sigma = K_i{}^k D_k \Omega -\Omega L_i + \frac12\Omega^3 j_i, \label{ConformalConstraint2} \\
&& D_i s = -\epsilon L_i \Sigma - L_{ik} D^k \Omega +
\frac12\Omega^2(\epsilon\Sigma j_i + T_{ij}D^j\Omega), \label{ConformalConstraint3}\\
&& D_i L_{jk} -D_j L_{ik} = -\epsilon \Sigma d_{kij}
+ D^l \Omega d_{lkij} -\epsilon (K_{ik} L_j
- K_{jk} L_i) + \Omega T_{ijk}, \label{ConformalConstraint4}\\
&& D_i L_j - D_j L_i = D^l \Omega d_{lij} +
K_i{}^k L_{jk} - K_j{}^k L_{ik} + \Omega J_{ij}, \label{ConformalConstraint5}\\
&& D^k d_{kij} =\epsilon \big(K^k{}_i d_{jk}
   -K^k{}_j d_{ik}\big) + J_{ij}, \label{ConformalConstraint6}\\
&& D^i d_{ij}= K^{ik} d_{ijk} + J_{i}, \label{ConformalConstraint7}\\
&& \lambda = 6 \Omega s - 3\epsilon \Sigma^2 - 3 D_k \Omega D^k\Omega. \label{ConformalConstraint8}
\end{eqnarray}
\end{subequations}
This system is supplemented by two geometric constraints arising from
\eqref{TraceCFE6}, namely, the conformal Codazzi-Mainardi and Gauss-Codazzi
equations:
\begin{subequations}
\begin{eqnarray}
&& D_j K_{ki} - D_k K_{ji} = \Omega d_{ijk} + h_{ij} L_k -
 h_{ik}L_j, \label{ConformalConstraint9}\\  
&& l_{ij} = -\epsilon\Omega d_{ij} + L_{ij} + \epsilon \bigg( K \big(
 K_{ij} -\displaystyle\frac{1}{4} K h_{ij}\big) - K_{ki}
  K_j{}^k + \displaystyle\frac{1}{4}  K_{kl} K^{kl}h_{\bmi\bmj}\bigg). \label{ConformalConstraint10}
\end{eqnarray}
\end{subequations}
Relations \eqref{ConformalConstraint1}-\eqref{ConformalConstraint8},
\eqref{ConformalConstraint9} and \eqref{ConformalConstraint10} are called the
\emph{metric tracefree conformal constraint equations}. A full derivation of this
system can be found in \cite{CFEBook} along with an extensive discussion of
their properties.

\subsubsection{The conformal constraints on the conformal boundary}
\label{Section:ConformalConstraintsConformalBoundary}

Let $\ell_{ab}$ be the intrinsic Lorentzian 3-metric associated to the
conformal boundary with normal vector $\notn^a$. In the following, objects
and operators crossed by a line $\slash$ will represent projections obtained
via $\ell_a{}^b$ and $\notn^a$.  Also, $\simeq$ will denote an equality valid
on $\mathscr{I}$. Evaluating the conformal constraints on the conformal
boundary ($\Omega = 0, \ \epsilon = 1$) we obtain the following simplified
system:
\begin{subequations}
\begin{eqnarray}
&& \not\!\Sigma \notK_{ij} \simeq s \ell_{ij}, \label{BdyConformalConstraint1} \\
&& \notD_i \notSigma \simeq 0, \label{BdyConformalConstraint2} \\
&& \notD_i s \simeq -\notL_i \notSigma, \label{BdyConformalConstraint3}\\
&& \notD_i \notL_{jk} - \notD_j \notL_{ik} \simeq -\notSigma \notd_{kij}
 - (\notK_{ik} \notL_j - \notK_{jk} \notL_i), \label{BdyConformalConstraint4}\\
&& \notD_i \notL_j - \notD_j \notL_i \simeq \notK_i{}^k \notL_{jk} - \notK_j{}^k \notL_{ik} , \label{BdyConformalConstraint5}\\
&& \notD^k \notd_{kij} \simeq \notK^k{}_i \notd_{jk} - \notK^k{}_j \notd_{ik} + \notJ_{ij}, \label{BdyConformalConstraint6}\\
&& \notD^j \notd_{ij} \simeq \notK^{jk} \notd_{jik} + \notJ_{i}, \label{BdyConformalConstraint7}\\
&& \lambda \simeq - 3\notSigma^2, \label{BdyConformalConstraint8}\\
&& \notD_j \notK_{ki} - \notD_k \notK_{ji} \simeq \ell_{ij} \notL_k - \ell_{ik}\notL_j, \label{BdyConformalConstraint9}\\
&& \notl_{ij} \simeq \notL_{ij} + \notK \big(\notK_{ij} -\displaystyle\frac{1}{4} \notK \ell_{ij}\big) - \notK_{ki}
  \notK_j{}^k + \displaystyle\frac{1}{4}  \notK_{kl} \notK^{kl}\ell_{ij}. \label{BdyConformalConstraint10}
\end{eqnarray}
\end{subequations}
A procedure to solve these equations in the vacuum case has been discussed in
\cite{Fri95} where the solution is given in terms of a gauge quantity related
to the Friedrich scalar and the rescaled Cotton tensor of $\ell_{ab}$. The
latter is defined as
\[
y_{ijk} \equiv \notD_i \notl_{jk} - \notD_j \notl_{ik}.
\]
Adopting this approach, the following result can be enunciated:

\begin{proposition}
\label{Proposition:ConformalConstraintsConformalBoundary}
Let $(\mathcal{M}, \bmg)$ be a 4-dimensional manifold and $\mathscr{I} \subset
\mathcal{M}$ a timelike conformal boundary with intrinsic 3-dimensional
Lorentzian metric $\ell_{ij}$ and normal $\notn^a$. Consider a tracefree
energy-momentum tensor $T_{ab}$ with $\notj_i$ its orthogonal-normal
projection with respect to $\notn^a$. Let $\varkappa(x)$ be a smooth scalar
gauge function defined on $\mathscr{I}$. Then a solution to the tracefree
conformal constraint equations
\eqref{BdyConformalConstraint1}-\eqref{BdyConformalConstraint10} on
$\mathscr{I}$ is given by the fields
\begin{subequations}
\begin{eqnarray}
&& \notSigma \simeq \sqrt{\frac{|\lambda|}{3}}, \label{SolutionLambda}\\
&& s \simeq \notSigma \varkappa, \label{SolutionConstraints0}\\
&& \notK_{ij} \simeq \varkappa \ell_{ij}, \label{SolutionConstraints1}\\
&& \notL_i \simeq - \notD_i \varkappa, \label{SolutionConstraints2}\\
&& \notL_{ij} \simeq \notl_{ij} - \frac{1}{2}\varkappa^2
   \ell_{ij}, \label{SolutionConstraints3}\\
&& \notd_{kij} \simeq - \notSigma^{-1} y_{ijk}, \label{SolutionConstraints4}
\end{eqnarray}
\end{subequations}
along with a tracefree symmetric tensor field $\notd_{ij}$ satisfying
\begin{equation}
\notD^j\notd_{ij} \simeq -\notSigma \notj_i \label{DiffEqWeylMatter}.
\end{equation}
\end{proposition}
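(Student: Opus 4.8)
\emph{Proof strategy.} The plan is to verify the ten boundary constraints \eqref{BdyConformalConstraint1}--\eqref{BdyConformalConstraint10} directly for the fields \eqref{SolutionLambda}--\eqref{SolutionConstraints4} together with any symmetric tracefree $\notd_{ij}$ obeying \eqref{DiffEqWeylMatter}, following the vacuum procedure of \cite{Fri95} and tracking the additional matter contributions. Two preliminary remarks organise the argument. First, since $\Xi$ vanishes on all of $\mathscr{I}$ its tangential derivative vanishes there, so $\nabla_a\Xi\simeq\notSigma\,\notn_a$ on $\mathscr{I}$; this is precisely why the explicit powers of $\Omega$ and the terms in $D_l\Omega$ drop out of the intrinsic constraints \eqref{ConformalConstraint1}--\eqref{ConformalConstraint10}, leaving the reduced system \eqref{BdyConformalConstraint1}--\eqref{BdyConformalConstraint10}. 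Second, \eqref{BdyConformalConstraint8} immediately gives $\notSigma^2\simeq|\lambda|/3$, so \eqref{SolutionLambda} holds with the positive root and, crucially, $\notSigma$ is constant on $\mathscr{I}$; hence \eqref{BdyConformalConstraint2} is automatic.

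Next I would read the remaining fields off the algebraic and first-order constraints. With the gauge scalar $\varkappa$ given, put $s\simeq\notSigma\varkappa$ (this is \eqref{SolutionConstraints0}); then \eqref{BdyConformalConstraint1} is equivalent to $\notK_{ij}\simeq\varkappa\,\ell_{ij}$, i.e.\ \eqref{SolutionConstraints1}. Since $\notSigma$ is constant, \eqref{BdyConformalConstraint3} yields $\notL_i\simeq-\notD_i\varkappa$, which is \eqref{SolutionConstraints2}. Feeding $\notK_{ij}\simeq\varkappa\,\ell_{ij}$ (hence $\notK\simeq3\varkappa$) into the Gauss--Codazzi constraint \eqref{BdyConformalConstraint10} and collecting the quadratic terms reduces it to $\notl_{ij}\simeq\notL_{ij}+\tfrac12\varkappa^2\ell_{ij}$, i.e.\ \eqref{SolutionConstraints3}. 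Substituting \eqref{SolutionConstraints1}--\eqref{SolutionConstraints3} into the Cotton-type constraint \eqref{BdyConformalConstraint4}, the $\varkappa\,\notD\varkappa$ terms produced by $\notL_{ij}$ cancel against the $\notK_{ik}\notL_j-\notK_{jk}\notL_i$ contribution, leaving $-\notSigma\,\notd_{kij}\simeq y_{ijk}$, i.e.\ \eqref{SolutionConstraints4}; I would also note that $-\notSigma^{-1}y_{ijk}$ possesses exactly the algebraic symmetries of a magnetic Weyl part, which follow from the symmetry of $\notl_{ij}$ and the contracted Bianchi identity for $\ell_{ij}$. With these choices the remaining first-order constraints \eqref{BdyConformalConstraint5} and \eqref{BdyConformalConstraint9} become identities: \eqref{BdyConformalConstraint5} reduces to the vacuous $0=\varkappa(\notL_{ji}-\notL_{ij})$, and \eqref{BdyConformalConstraint9} to an equality that holds by symmetry of $\ell_{ij}$.

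The only genuinely matter-dependent step is the pair of divergence constraints \eqref{BdyConformalConstraint6} and \eqref{BdyConformalConstraint7}. The key computation is the restriction of the rescaled Cotton tensor \eqref{CottonTensorDefinition} to $\mathscr{I}$: using $\Xi\simeq0$ and $\nabla_a\Xi\simeq\notSigma\,\notn_a$ it becomes $T_{abc}\simeq-\notSigma\,g_{c[a}(\notn^d T_{b]d})-3\notSigma\,T_{c[a}\notn_{b]}$. Projecting this according to the definitions of $\notJ_{ij}$ and $\notJ_i$, every term of the $\notJ_{ij}$ projection vanishes (each contains a contraction $\notn_a h_c{}^a=0$), so $\notJ_{ij}\simeq0$, whereas the surviving contractions of the $\notJ_i$ projection combine to $\notJ_i\simeq-\notSigma\,\notj_i$. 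Consequently \eqref{BdyConformalConstraint6}, whose $\notK$-terms vanish by the symmetry of $\notd_{ij}$, collapses to $\notD^k\notd_{kij}\simeq0$, equivalently $\notD^k y_{ijk}\simeq0$ --- the contracted second Bianchi identity for the three-dimensional Cotton tensor of $\ell_{ij}$, which holds identically; and \eqref{BdyConformalConstraint7}, whose $\notK$-term vanishes because $\ell^{jk}\notd_{jik}\simeq0$ (a consequence of $\notd_{[ijk]}\simeq0$), reduces to exactly $\notD^j\notd_{ij}\simeq-\notSigma\,\notj_i$, that is \eqref{DiffEqWeylMatter}. Since this last equation is underdetermined for a symmetric tracefree tensor, such a $\notd_{ij}$ exists, and the free boundary data are then $\ell_{ij}$, the gauge scalar $\varkappa$, and this $\notd_{ij}$.

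I expect the main obstacle to be the Cotton-tensor bookkeeping of the final paragraph: carrying out the boundary projections of \eqref{CottonTensorDefinition} with the correct index placements to establish $\notJ_{ij}\simeq0$ and $\notJ_i\simeq-\notSigma\,\notj_i$, and verifying that the matter-free parts of \eqref{BdyConformalConstraint6}--\eqref{BdyConformalConstraint7} genuinely vanish, so that what remains is precisely the three-dimensional Cotton identity together with the single residual equation \eqref{DiffEqWeylMatter}. By contrast, the reductions of \eqref{BdyConformalConstraint10} and \eqref{BdyConformalConstraint4} are routine, though the cancellations must be collected with care.
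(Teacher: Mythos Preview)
Your proposal is correct and follows essentially the same route as the paper's own proof: extract $\notSigma$ from \eqref{BdyConformalConstraint8}, fix $s$ via the gauge scalar $\varkappa$, read off $\notK_{ij}$, $\notL_i$, $\notL_{ij}$, $\notd_{kij}$ from \eqref{BdyConformalConstraint1}, \eqref{BdyConformalConstraint3}, \eqref{BdyConformalConstraint10}, \eqref{BdyConformalConstraint4} in turn, verify that \eqref{BdyConformalConstraint2}, \eqref{BdyConformalConstraint5}, \eqref{BdyConformalConstraint9} hold automatically, and then compute $\notJ_{ij}\simeq0$ and $\notJ_i\simeq-\notSigma\,\notj_i$ from \eqref{CottonTensorDefinition} so that the divergence constraints reduce to the three-dimensional Cotton identity $\notD^k y_{ijk}\simeq0$ and the residual equation \eqref{DiffEqWeylMatter}. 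Your account is in fact more detailed in places --- the explicit cancellations in \eqref{BdyConformalConstraint4} and the symmetry checks on $y_{ijk}$ --- than the paper's, which simply asserts these verifications as straightforward.
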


\begin{proof}
Firstly, $\notSigma$ is given by \eqref{BdyConformalConstraint8}.  As mentioned
before, $s$ is a gauge quantity on $\mathscr{I}$ and expressed by equation
$\eqref{SolutionConstraints0}$. Direct substitution into constraints
\eqref{BdyConformalConstraint1}, \eqref{BdyConformalConstraint3},
\eqref{BdyConformalConstraint10} and \eqref{BdyConformalConstraint4} readily
leads to the solutions for $\notK_{ij}, \ \notL_i, \ \notL_{ij}$ and
$\notd_{ijk}$, respectively. Using these, equations
\eqref{BdyConformalConstraint2}, \eqref{BdyConformalConstraint5} and
\eqref{BdyConformalConstraint9} are trivially satisfied.  Regarding the
equations with matter terms, when the fields $\notJ_{ij}$ and $\notJ_i$ are
written explicitly in terms of the energy-momentum tensor via equation
\eqref{CottonTensorDefinition}, a straightforward calculation yields
\[
\notJ_{ij} \simeq 0, \quad \notJ_i \simeq - \notSigma \notj_i. 
\]
On the other hand, by virtue of the definition of $y_{ijk}$, it follows that
that $\notD^ky_{ijk} \simeq 0$. Using this and the expressions for $\notJ_i$
and $\notJ_{ij}$ stated above, it is found that \eqref{BdyConformalConstraint6}
is trivially satisfied and \eqref{BdyConformalConstraint7} corresponds to
equation \eqref{DiffEqWeylMatter}.
\end{proof}

Having obtained the solutions for the constraint equations on the conformal
boundary, a converse-like result can be formulated with the addition of an
auxiliary assumption:

\begin{lemma}
\label{Lemma:ConverseProposition}
Let $\mathcal{T} \subset \mathcal{M}$ be a timelike hypersurface such that
conditions \eqref{SolutionLambda}-\eqref{SolutionConstraints3} hold. If
$\Omega=0$ on some fiduciary spacelike hypersurface $\mathcal{C}_\star$ of
$\mathcal{T}$, then one has that $\Omega =0$ on $\mathcal{T}.$
\end{lemma}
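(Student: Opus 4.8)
The plan is to read the statement as an assertion that $\Omega$, viewed as a function on the timelike hypersurface $\mathcal{T}$, is forced to vanish identically once it vanishes on the initial slice $\mathcal{C}_\star$, by exhibiting a well-posed homogeneous evolution problem for $\Omega$ along $\mathcal{T}$ whose only solution with trivial data is the zero solution. First I would set up the relevant intrinsic data on $\mathcal{T}$: the induced Lorentzian 3-metric $\ell_{ij}$, the extrinsic curvature $\notK_{ij}$, the normal derivative operator, and the pulled-back conformal fields $\Sigma$, $s$, $L_i$, $L_{ij}$, $d_{ij}$, $d_{ijk}$, together with the matter projections. The conformal constraint equations \eqref{ConformalConstraint1}--\eqref{ConformalConstraint3} restricted to $\mathcal{T}$ (with $\epsilon = 1$) then give intrinsic differential relations among these fields. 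In particular, equations \eqref{ConformalConstraint1} and \eqref{ConformalConstraint2} express $D_i D_j \Omega$ and $D_i \Sigma$ in terms of $\Omega$, $\Sigma$ and lower-order data.

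The key step is to extract from these constraints a \emph{closed} first-order symmetric-hyperbolic (or second-order wave-type) system on $\mathcal{T}$ for the pair $(\Omega, \Sigma)$ — recalling that here $\Sigma$ plays the role of the transverse derivative $\notn^a\nabla_a\Xi$, but as a field intrinsic to $\mathcal{T}$ it is simply an unknown governed by the constraints. Concretely, differentiate \eqref{ConformalConstraint2} intrinsically and feed in \eqref{ConformalConstraint1} to obtain a wave equation $\Box_\ell \Omega = (\text{smooth coefficients})\cdot\Omega + (\text{smooth coefficients})\cdot\Sigma$, and pair it with \eqref{ConformalConstraint2} itself, which reads $D_i\Sigma = K_i{}^k D_k\Omega - \Omega L_i + \tfrac12\Omega^3 j_i$, a relation that is homogeneous in $(\Omega, D\Omega)$. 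This is where the hypotheses \eqref{SolutionLambda}--\eqref{SolutionConstraints3} are used: they guarantee that the coefficients $\Sigma \simeq \sqrt{|\lambda|/3}$, $\notK_{ij}\simeq\varkappa\ell_{ij}$, $L_i \simeq -D_i\varkappa$, $L_{ij}$ are the prescribed smooth functions of the gauge data, so the system really is linear and homogeneous in $\Omega$ with smooth coefficients. The whole system degenerates to the trivial solution $\Omega \equiv 0$, $\Sigma \equiv \text{const}$ precisely because every term on the right-hand sides carries at least one factor of $\Omega$.

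With that structure in hand the conclusion is a uniqueness statement for a linear homogeneous hyperbolic problem on $\mathcal{T}$ with vanishing Cauchy data on $\mathcal{C}_\star$ (namely $\Omega|_{\mathcal{C}_\star} = 0$, and $D_i\Omega|_{\mathcal{C}_\star} = 0$ follows since $\Omega \equiv 0$ on $\mathcal{C}_\star$ forces its tangential derivatives to vanish, while the transverse derivative is controlled by \eqref{ConformalConstraint2} evaluated on $\mathcal{C}_\star$, which is again homogeneous in the data). Standard energy estimates then give $\Omega \equiv 0$ on the domain of dependence, and since $\mathcal{T}$ is covered by such domains one concludes $\Omega = 0$ on all of $\mathcal{T}$. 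The main obstacle I anticipate is the algebraic bookkeeping needed to verify that, after differentiating and substituting, no $\Omega$-independent source term survives — i.e.\ that the matter contributions and the curvature terms genuinely assemble into a homogeneous right-hand side — and, relatedly, checking that the resulting principal symbol is that of a genuine wave operator with respect to $\ell_{ij}$ so that the energy argument applies; the hypotheses \eqref{SolutionLambda}--\eqref{SolutionConstraints3} are exactly what one needs to close this.
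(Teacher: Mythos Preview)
Your strategy is essentially the paper's: derive from the conformal constraints an intrinsic, homogeneous wave equation for $\Omega$ on $\mathcal{T}$, show that $\Omega$ and its first derivatives vanish on $\mathcal{C}_\star$, and conclude by uniqueness. Two points deserve comment.

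First, the paper reaches the wave equation more directly than you propose. Rather than differentiating \eqref{ConformalConstraint2} and feeding in \eqref{ConformalConstraint1}, one simply takes the $\ell$-trace of \eqref{ConformalConstraint1} and substitutes the hypotheses \eqref{SolutionConstraints0}--\eqref{SolutionConstraints3}. The point is that the $\Sigma$-dependent contributions $-\Sigma \notK + 3s = -3\Sigma\varkappa + 3\Sigma\varkappa$ cancel \emph{identically}, leaving $\square_\ell\Omega = -\Omega(\tfrac{r}{4} - \tfrac{3}{2}\varkappa^2) - \tfrac{1}{2}\Omega^3\rho$, which is manifestly homogeneous in $\Omega$ with no need to treat $\Sigma$ as an unknown in a coupled system. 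Your route via differentiating \eqref{ConformalConstraint2} would produce $\varkappa\,\square_\ell\Omega = (\text{homogeneous in }\Omega)$, which is a genuine wave equation only where $\varkappa\neq 0$; the direct trace avoids this degeneracy.

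Second, and more importantly, there is a gap in your initial-data argument. You obtain the transverse derivative of $\Omega$ on $\mathcal{C}_\star$ from \eqref{ConformalConstraint2}, which under the hypotheses reads $\varkappa\, \notD_i\Omega = -\Omega\,\notD_i\varkappa - \tfrac{1}{2}\Omega^3 j_i$. This indeed forces $\notD_i\Omega|_{\mathcal{C}_\star}=0$ \emph{provided} $\varkappa\neq 0$ there; if $\varkappa=0$ on $\mathcal{C}_\star$ the relation is vacuous and you cannot extract the first-order Cauchy datum. The paper handles this case separately by observing that a further conformal rescaling $g'_{ab}=\omega^2 g_{ab}$ can always be chosen so that $\varkappa\neq 0$, reducing to the previous case. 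You should either include this gauge argument or give an alternative mechanism for securing $\notD_i\Omega|_{\mathcal{C}_\star}=0$ when $\varkappa$ vanishes.
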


\begin{proof}
Consider the case when $\varkappa\neq 0$ on $\mathcal{C}_\star$.  Using
equations \eqref{SolutionConstraints0}, \eqref{SolutionConstraints1} and
\eqref{SolutionConstraints3}, the trace of the conformal constraint
\eqref{ConformalConstraint1} provides with the following wave equation for
$\Omega$ to be satisfied on $\mathcal{T}$:
\begin{equation}
\notD_i \notD^i \Omega \equiv \square_\ell \Omega = 
-\Omega \bigg(\frac{r}{4} - \frac32 \varkappa^2\bigg) - \frac12\Omega^3\rho,
\label{BoxOmega}
\end{equation}
where it has been used that for a tracefree unphysical energy-momentum tensor
$T_i{}^i = -\epsilon\rho$. On the other hand, when \eqref{SolutionLambda},
\eqref{SolutionConstraints1} and $\eqref{SolutionConstraints2}$ are substituted
into constraint $\eqref{ConformalConstraint2}$ we have
\begin{equation}
\varkappa \notD_i \Omega = -\Omega \notD_i\varkappa - \frac12\Omega^3 j_i.
\end{equation}
As $\varkappa\neq 0$ and $\Omega=0$ on $\mathcal{C}_\star$, it follows from the
last equation that $\notD_i\Omega =0$ on $\mathcal{C}_{\star}$, which
represents a first order initial condition for $\Omega$.  Due to the
homogeneity of \eqref{BoxOmega}, along with the uniqueness of its solutions, we
conclude then that $\Omega =0$ on $\mathcal{T}$, that is to say, it corresponds
to the conformal boundary.  Regarding the case $\varkappa = 0$, it has been
showed in \cite{CarVal18b} how the conformal gauge freedom can be exploited to
render this case into the $\varkappa \neq 0$ one.
\end{proof}

\subsubsection{Solutions to the conformal constraints on a spacelike hypersurface}

Constraint equations
\eqref{ConformalConstraint1}-\eqref{ConformalConstraint10} enable us to obtain
the conformal version of the so-called Hamiltonian and Momentum constraints on
a spacelike hypersurface $(\epsilon = -1)$. A straightforward calculation shows
that for a tracefree matter field these take the form:
\begin{subequations}
\begin{eqnarray}
&& \frac{\Omega}{2}^2 (r + K^2 - K_{ij} K^{ij}) = 2 K \Omega \Sigma - 2 \Omega D_{i}D^{i}\Omega 
- 3 \Sigma^2 + 3 D_{i}\Omega D^{i}\Omega  + \lambda + \Omega^4\rho, \\
&& \Omega (D_{j}K_{i}{}^{j} - D_{i}K) = 2 (K_{ij} D^{j}\Omega - D_{i}\Sigma) + \Omega^3 j_{i}.
\end{eqnarray}
\end{subequations}
It follows that under a conformal approach, the collection of fields $(h_{ij},
K_{ij}, \Omega, \Sigma, \rho, j_i)$ satisfying the previous equations must be
prescribed as the basic initial data, i.e. they determine the remaining fields
on a spacelike hypersurface. Together with the boundary data, this set will
serve to evolve the wave equations for the conformal fields. Directly from the
conformal constraints one obtains the following expressions for the initial
data:
\begin{subequations}
\begin{eqnarray}
&& s =\frac{1}{3} \bigg( \Delta \Omega + \frac{1}{4}\Omega\big(r +K^2
   - K_{ij}K^{ij} \big)-\Sigma K  + \frac12 \Omega^3\rho \bigg), \label{InitialData1}\\
&& L_{ij} = \frac{1}{\Omega}\bigg( sh_{ij} + \Sigma K_{ij} -D_i D_j \Omega  \bigg) +\frac12\Omega^2T_{ij}, \label{InitialData2}\\
&& L_i = \frac{1}{\Omega}\big( K_i{}^k D_k\Omega - D_i \Sigma  \big) + \frac12\Omega^2 j_i, \label{InitialData3}
  \\
&& d_{ij} = \frac{1}{\Omega}\bigg( -L_{ij} +l_{ij} +\big( K \big(
 K_{ij} -\displaystyle\frac{1}{4} K h_{ij}\big) - K_{ki}
  K_j{}^k + \displaystyle\frac{1}{4}  K_{kl} K^{kl}h_{\bmi\bmj}\big)
   \bigg), \label{InitialData4}\\
&& d_{ijk} = \frac{1}{\Omega}\big( D_j K_{ki}- D_k K_{ji} + h_{ik}L_j
   - h_{ij} L_k  \big). \label{InitialData5}
\end{eqnarray}
\end{subequations}
The fact that these expressions are singular at $\Omega = 0$ leads to the
following:

\begin{definition}[\textbf{anti-de Sitter-like initial data with
    tracefree matter}]
\label{Definition:AdSData}
For a tracefree anti-de Sitter-like initial data set it is understood a
3-manifold $\mathcal{S}_\star$ with boundary $\partial \mathcal{S}_\star\approx
\mathbb{S}^2$ together with a collection of smooth fields
$(h_{ij},K_{ij},\Omega,\Sigma, \rho, j_i)$ such that:
\begin{itemize}
\item[(i)] $\Omega>0$ on $\mbox{\em int}\, \mathcal{S}_\star$;
\item[(ii)] $\Omega=0$ and $|\mbox{\em d}\Omega|^2=\Sigma^2-\tfrac{1}{3}\lambda>0$ on
  $\partial\mathcal{S}_\star$; 
\item[(iii)] the fields $s$, $L_{ij}$, $L_i$, $d_{ij}$ and $d_{ijk}$
  computed from relations \eqref{InitialData1}-\eqref{InitialData5} extend smoothly to
  $\partial \mathcal{S}$.
\end{itemize}
\end{definition}

\begin{remark}
{\em Anti-de Sitter-like initial data sets are closely related to so-called
hyperboloidal data sets for Minkowski-like spacetimes ---see \cite{Kan96a}. By
means of this correspondence it is possible to adapt the existence results for
hyperboloidal initial data sets in \cite{AndChrFri92,AndChr94} to the anti-de
Sitter-like setting. In particular, this shows the existence of a large class
of time symmetric data, i.e. data for which $K_{ij}=0$.}
\end{remark}

\section{General set-up}
\label{Section:Setup}

In this section we identify the relevant boundary data required for the
formulation of the initial-boundary problem for anti-de Sitter-like spacetimes
with tracefree matter. Except for the data corresponding to the electric part
of the Weyl tensor, the rest of the construction is identical to the one for
the vacuum case analysed in \cite{CarVal18b} where a more detailed discussion
is presented.

\medskip Let $(\mathcal{M},\bmg,\Xi)$ be a conformal extension of an anti-de
Sitter-like spacetime $(\tilde{\mathcal{M}},\tilde{\bmg})$ where $g_{ab}$ and
$\tilde{g}_{ab}$ are conformally related metrics.  Let $\mathcal{S}_\star \subset
\mathcal{M}$ be a smooth, compact and oriented spacelike hypersurface with
boundary $\partial\mathcal{S}_\star$. Furthermore, $\mathcal{S}_\star \cap \mathscr{I} =
\partial\mathcal{S_\star}$ is the so-called \emph{corner}. The portion of
$\mathscr{I}$ in the future of $\mathcal{S}_\star$ will be denoted by
$\mathscr{I}^+$. In addition, it will be assumed that the causal future
$J^+(\mathcal{S}_\star)$ coincides with the future domain of dependence
$D^+(\mathcal{S}_\star\cup \mathscr{I}^+)$ and that $\mathcal{S}_\star \cup
\mathscr{I}^+\approx \mathcal{S}_\star \times [0,1)$ so that, in particular,
$\mathscr{I}^+ \approx \partial \mathcal{S}_\star \times [0,1)$ ---see \Cref{Figure:AdSDomains}.

\subsection{Coordinates}

Let us introduce adapted coordinates $x = (x^\mu)$ such that $\mathcal{S}_\star$ and
$\mathscr{I}$ are given as
\[
\mathcal{S} =\{ x\in \mathbb{R}^3 \; | \;  x^0=0  \}, \qquad
\mathscr{I} =\{ x\in \mathbb{R}^3 \; | \;  x^1=0  \},
\]
so the corner is defined by the condition $x^0 = x^1 = 0$.  Coordinates are
propagated off $\mathcal{S}_\star$ via the generalised wave
coordinated condition \eqref{GeneralisedWaveCoordinateCondition}. Observe that
this can always be locally solved: the expression above provides the value of
the coordinates on $\mathcal{S}_\star$ while their normal derivatives are
obtained from the coordinates are $(x^\mu)$ are independent, that is to say,
the coordinate differentials $\mathbf{d}x^\mu$ must be linearly independent.

\begin{figure}
\begin{center}
\includegraphics[scale=1]{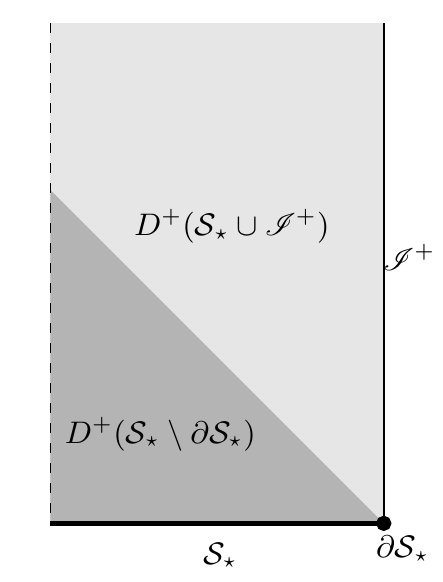}
\end{center}
\caption[Diagram of the construction of anti-de Sitter-like spacetimes.]
{Penrose diagram of the set-up for the construction of anti-de Sitter-like
spacetimes as described in the main text. Initial data prescribed on
$\mathcal{S}_\star\setminus\partial\mathcal{S}_\star$ allows to recover the
dark shaded region $D^+(\mathcal{S}_\star\setminus\partial\mathcal{S}_\star)$.
In order to recover $D^+(\mathcal{S}_\star \cup \mathscr{I}^+)$ it is necessary
to prescribe boundary data on $\mathscr{I}^+$.}
\label{Figure:AdSDomains}
\end{figure}

\subsection{Boundary conditions for the conformal evolution equations}
\label{Section:BoundaryConditions}

In order to formulate an initial-boundary problem for anti-de Sitter-like
spacetimes we must provide Dirichlet boundary data for the wave equations
\eqref{ReducedWaveCFE1}-\eqref{ReducedWaveCFE5} on the conformal boundary. Next we
summarise the results found in \cite{CarVal18b} which directly extend to the
tracefree case.

\medskip
Adopting a Gaussian gauge on the conformal boundary, the metric can be written
as
\[
\bmg \simeq \mathbf{d}x^2 \otimes \mathbf{d}x^2 + \ell_{\alpha\beta}
\mathbf{d}x^\alpha \otimes \mathbf{d}x^\beta, \quad 
(\alpha,\,\beta =0,\, 2,\, 3)
\]
with $\ell_{\alpha\beta}$ the components of the Lorentzian 3-metric
$\ell_{ij}$. In terms of this the boundary data are 
\begin{eqnarray*}
& \Xi \simeq 0,  \qquad s \simeq \notSigma \varkappa(x), \qquad \notL_\alpha \simeq - \notD_\alpha\varkappa, \qquad
\notL_{\alpha\beta} \simeq \notl_{\alpha\beta} -
  \frac12\varkappa^2(x)\ell_{\alpha\beta}, &  \\
& \notL_{11} \simeq \frac16\mathcal{R}(x) - \frac14r + \frac32\varkappa^2(x), \qquad
\notd_{\alpha\beta\gamma} \simeq - \notSigma^{-1}
  y_{\alpha\beta\gamma},& 
\end{eqnarray*}
where $\not\! \Sigma$ is given by $\eqref{SolutionLambda}$. However, the data
for the electric part of the Weyl tensor $\notd_{ij}$ differs from the one in the vacuum
case, so it will be analysed in the next subsection.

\subsubsection{Boundary data for $\not\! d_{ij}$}

As showed in Proposition
\ref{Proposition:ConformalConstraintsConformalBoundary}, the electric part
of the rescaled Weyl tensor is determined by
equation \eqref{DiffEqWeylMatter}.  In order to analyse this differential
equation it is convenient to perform a $2+1$ decomposition on the conformal
boundary. For this purpose consider a foliation of $\mathscr{I}$ given by a
family of spacelike hypersurfaces $\partial\mathcal{S}_t$ with normal vector
$\nu_i$. This induces the projector
\[
s_i{}^j = \ell_i{}^j + \nu_i\nu^j.
\]
In an analogous manner to \Cref{Section:ConformalEinsteinConstraints},
we define the intrinsic derivative operator $\delta_i$ satisfying
$\delta_i s_{jk} = 0$ along with a normal derivative $\delta \equiv
\nu^i \notD_i$. The projector $s_i{}^j$ allows us to further decompose
tensors $\notd_{ij}$ and $\notj_i$ with respect to $\nu^i$ as
\[
\notd_{ij} = w_{ij} -\nu_i w_j -\nu_jw_i + w \nu_i \nu_j, \quad \not\! j_i = p_i - p\nu_i,
\]
 where the different components are defined as: 
\[
w_{ij} \equiv \ell_i{}^k \ell_j{}^l\notd_{kl}, \quad w_i \equiv \nu^l\ell_i{}^k\notd_{kl}, \quad 
w\equiv \nu^i \nu^j \notd_{ij}, \quad p_i \equiv s_i{}^k \notj_k, \quad p \equiv \nu^i \notj_i.
\]
From this it follows that $w_i{}^i = w$.  Assuming that the acceleration $\nu^j
\notD_j \nu_i$ locally vanishes, we introduce the extrinsic curvature $k_{ij}
\equiv \notD_i \nu_j $ associated to the foliation, as well as its trace $k
\equiv s^{ij} k_{ij}$. A calculation shows that equation
\eqref{DiffEqWeylMatter} implies the system
\begin{subequations}
\begin{eqnarray}
&& \delta^i w_i - \delta w = -2 k^{ij}w_{\{ij\}} - \notSigma p, \label{SymmHyp1} \\
&& 2\delta w_i - \delta_i w 
= -2k w_i - 2w_j k_i{}^j + 2\delta^j w_{\{ij\}} + 2\notSigma p_i, \label{SymmHyp2}
\end{eqnarray}
\end{subequations}
with $w_{\{ij\}} \equiv w_{ij} - \frac{1}{2}s_{ij} w$ being the $s$-tracefree
part of $w_{ij}$.

\begin{remark}
\label{BoundaryDataWeyl}
\em{If $w_{\{ij\}}, \ p_i$ and $p$ are given, then equations
\eqref{SymmHyp1}-\eqref{SymmHyp2} constitute a symmetric hyperbolic system for
$w$ and $w_i$. In this sense, these fields constitute additional pieces of
basic boundary data. The particular tracefree matter model in consideration
will determine $p_i$ and $p$ in an explicit way.  On the other hand, as noticed
in \cite{CarVal18b}, the two independent components of $w_{\{ij\}}$ can be
related to the notions of incoming and outgoing radiation. By exploiting the
Newman-Penrose formalism, this can be expressed in terms of two complex-valued
scalar fields
$\psi_0$ and $\psi_4$.}
\end{remark}

The above discussion leads to:

\begin{lemma}
\label{Lemma:BoundaryData}
Let $\mathscr{I}$ be endowed with the following smooth fields:

\begin{itemize}

\item[(i)] a 3-dimensional Lorentzian metric $\ell_{ij}$; 
\item[(ii)] a set of coordinate gauge source functions $\mathcal{F}^{\mu}(x)$
and the gauge function $\varkappa(x)$;
\item[(iii)] a symmetric tensor $w_{\{ij\}}$ which is spatial with respect to
the foliation induced on $\mathscr{I}$ by the functions $\mathcal{F}^{\mu}(x)$
and tracefree with respect to the metric induced on the leaves of the
foliation;
\item[(iv)] a spatial (in the same sense as in (iii)) vector $p_i$, and scalars
$p$ and $\notSigma$;
\item[(v)] a choice of fields $w$ and $w_i$ on a fiduciary hypersurface
$\partial\mathcal{S}_\star$ of $\mathscr{I}$.
\end{itemize}
Then there exists $t_\bullet > 0$ such that on the cylinder
$\mathscr{I}_{t_\bullet} \approx [0,t_\bullet) \times
\partial\mathcal{S}_\star$ the fields $w$ and $w_i$ are uniquely determined
and, together with the fields prescribed in (iii) and (iv), satisfy the
constraint \eqref{DiffEqWeylMatter}.
\end{lemma}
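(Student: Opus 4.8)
The plan is to recognise that, after the reductions already performed, the content of the lemma is a direct application of the standard local existence and uniqueness theory for first-order symmetric hyperbolic systems, combined with the observation that the passage from \eqref{DiffEqWeylMatter} to \eqref{SymmHyp1}--\eqref{SymmHyp2} is reversible.

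First I would fix the geometric framework on $\mathscr{I}$. The coordinate gauge source functions $\mathcal{F}^{\mu}(x)$ of hypothesis (ii) determine, via \eqref{GeneralisedWaveCoordinateCondition}, adapted coordinates on $\mathscr{I}$ and hence the foliation by the level sets $\partial\mathcal{S}_t=\{x^0=t\}$ with $\partial\mathcal{S}_0=\partial\mathcal{S}_\star$, together with the unit normal $\nu_i$, the projector $s_i{}^j$, the operators $\delta_i$ and $\delta$, and --- under the stated assumption that the acceleration $\nu^j\notD_j\nu_i$ vanishes --- the extrinsic curvature $k_{ij}$ and its trace $k$. All of these are smooth functions of the prescribed data $\ell_{ij}$ and $\mathcal{F}^{\mu}$. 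Feeding in the prescribed fields $w_{\{ij\}}$, $p_i$, $p$ and $\notSigma$ of (iii)--(iv) as coefficients and inhomogeneous terms, equations \eqref{SymmHyp1}--\eqref{SymmHyp2} become, by \Cref{BoundaryDataWeyl}, a first-order symmetric hyperbolic system for the pair $(w,w_i)$ with smooth coefficients, in which the leaves $\partial\mathcal{S}_t$ are non-characteristic. Since $\partial\mathcal{S}_\star\approx\mathbb{S}^2$ is compact and without boundary, no lateral boundary conditions are needed: prescribing the data $w|_{\partial\mathcal{S}_\star}$ and $w_i|_{\partial\mathcal{S}_\star}$ from hypothesis (v) and appealing to the existence and uniqueness results for symmetric hyperbolic systems (see e.g. \cite{CheWah83}) yields some $t_\bullet>0$ and a unique smooth solution $(w,w_i)$ of \eqref{SymmHyp1}--\eqref{SymmHyp2} on $\mathscr{I}_{t_\bullet}\approx[0,t_\bullet)\times\partial\mathcal{S}_\star$.

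It then remains to close the loop by checking that this solution actually verifies \eqref{DiffEqWeylMatter}. Reconstruct the symmetric tracefree tensor $\notd_{ij}=w_{ij}-\nu_iw_j-\nu_jw_i+w\,\nu_i\nu_j$ with $w_{ij}=w_{\{ij\}}+\tfrac12 s_{ij}w$, and $\notj_i=p_i-p\,\nu_i$. The covariant equation $\notD^j\notd_{ij}+\notSigma\notj_i\simeq 0$ has three components; contracting with $\nu^i$ returns \eqref{SymmHyp1} and projecting with $s_i{}^j$ returns \eqref{SymmHyp2} --- this is exactly the computation alluded to before \Cref{BoundaryDataWeyl}, read in reverse. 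A dimension count confirms nothing is left over: a symmetric tracefree $\notd_{ij}$ in three dimensions has five independent components, of which $w_{\{ij\}}$ (two) are freely prescribed and $(w,w_i)$ (three) are fixed by the three scalar equations comprising \eqref{SymmHyp1}--\eqref{SymmHyp2}. Hence the reconstructed $\notd_{ij}$ satisfies \eqref{DiffEqWeylMatter}, and uniqueness of $(w,w_i)$ is inherited from the uniqueness clause of the symmetric hyperbolic theory.

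I expect the only genuine subtlety --- rather than a true obstacle --- to be the verification that \eqref{DiffEqWeylMatter} and the split system \eqref{SymmHyp1}--\eqref{SymmHyp2} are equivalent rather than merely related by an implication; once the projections and the component count above are in hand this is routine, and the rest of the argument is a black-box application of hyperbolic PDE theory on a compact fibre.
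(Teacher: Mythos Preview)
Your proposal is correct and follows essentially the same route as the paper: the paper presents the lemma as an immediate consequence of the preceding discussion (the $2+1$ split of \eqref{DiffEqWeylMatter} into \eqref{SymmHyp1}--\eqref{SymmHyp2} and \Cref{BoundaryDataWeyl}), without writing out a separate proof. Your write-up is in fact more complete, since you make explicit the reverse implication --- that a solution of the split system reconstructs a $\notd_{ij}$ satisfying \eqref{DiffEqWeylMatter} --- which the paper leaves implicit; one minor quibble is that \cite{CheWah83} concerns second-order quasilinear wave equations rather than first-order symmetric hyperbolic systems, so a different reference would be more apt here.
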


\subsection{Summary}
The analysis of this section can be summarised in the following:

\begin{proposition}
\label{Proposition:SummaryBoundaryData}
Let $(\ell_{ij},\ w_{\{ij\}},\ p_i, \ p)$ be a collection of smooth fields
defined on $\mathscr{I}$ as in \Cref{Lemma:BoundaryData} and $(\notSigma, \ s,
\ \notK_{ij}, \ \notL_i, \ \notL_{ij}, \ \notd_{ijk})$ be the fields
constructed according to the procedure described in the previous subsection.
Then, the zero-quantities defined by relations \eqref{ZQ1}-\eqref{ZQ6} satisfy
\begin{eqnarray*}
& \hspace{-0.7cm} \ell_b{}^a\Theta_a\simeq 0, \quad Z \simeq 0 & \\
& \ell_c{}^a \ell_d{}^b \Upsilon_{ab}\simeq 0, \quad
\notn^a \ell_c{}^b\Upsilon_{ab}\simeq0, & \\
& \ell_e{}^c \ell_f{}^d
\ell_g{}^b \Delta_{cdb}\simeq 0, \quad \notn^b \ell_e{}^c \ell_f{}^d \Delta_{cdb}\simeq 0,& \\
& \notn^b \ell_e{}^c \ell_f{}^d\Lambda_{bcd}\simeq 0, \quad
  \notn^b \notn^d \ell_e{}^c \Lambda_{bcd}\simeq 0, & \\
& \ell_a{}^e \ell_b{}^f \ell_c{}^g \ell_d{}^h P_{efgh} \simeq 0, \quad
\notn^d \ell_a{}^e \ell_b{}^f \ell_c{}^g P_{edfg} \simeq 0, &
\end{eqnarray*}
at least on $\mathscr{I}_{t_\bullet} \approx [0,t_\bullet)\times \partial\mathcal{S}_\star$.

\end{proposition}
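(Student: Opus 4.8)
The plan is to verify each of the listed vanishing relations by directly substituting the explicitly constructed boundary fields of \Cref{Proposition:ConformalConstraintsConformalBoundary} together with the Gaussian-gauge boundary data of \Cref{Section:BoundaryConditions} into the definitions \eqref{ZQ1}--\eqref{ZQ6} of the zero-quantities, and then projecting with $\ell_a{}^b$ and $\notn^a$. The key observation is that the full zero-quantities decompose, under the $3+1$ split with respect to the timelike boundary $\mathscr{I}$, into a collection of \emph{intrinsic} components (fully tangential projections) and \emph{transverse} components (those with at least one $\notn$ index that can be written using the intrinsic derivative $\notD_i$ alone, without any normal derivative off $\mathscr{I}$). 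The relations asserted in the Proposition are precisely those projections of the zero-quantities that are built solely from quantities intrinsic to $\mathscr{I}$ and tangential derivatives thereof — and the conformal constraint equations \eqref{BdyConformalConstraint1}--\eqref{BdyConformalConstraint10}, which are satisfied on $\mathscr{I}$ by construction (\Cref{Proposition:ConformalConstraintsConformalBoundary} and \Cref{Lemma:BoundaryData}), are exactly the statements that these projections vanish. So the proof is essentially a dictionary between the two lists.

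I would proceed component by component. First, $Z \simeq 0$ follows immediately from \eqref{ConformalConstraint8} evaluated at $\Omega = 0$, i.e. from \eqref{BdyConformalConstraint8}, since \eqref{SolutionLambda} was defined to solve it. Next, the tangential projection $\ell_b{}^a\Theta_a \simeq 0$ is the content of \eqref{BdyConformalConstraint3} given \eqref{SolutionConstraints0} and \eqref{SolutionConstraints2}; here one uses that on $\mathscr{I}$ the tangential gradient $\notD_a s = -\notL_a \notSigma$ reproduces $\Theta_a$ up to terms carrying a factor of $\Omega = 0$ or a normal derivative that is absorbed. The two projections of $\Upsilon_{ab}$ correspond to the trace and momentum-type pieces of \eqref{ConformalConstraint1} and \eqref{ConformalConstraint2} respectively: at $\Omega = 0$ these reduce to $\notSigma \notK_{ij} \simeq s \ell_{ij}$ (which is \eqref{BdyConformalConstraint1}, solved by \eqref{SolutionConstraints0}--\eqref{SolutionConstraints1}) and to the statement $\notD_i\notSigma \simeq 0$ of \eqref{BdyConformalConstraint2}. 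The $\Delta_{abc}$ projections similarly reduce, via the $\Omega\to 0$ limit of \eqref{ConformalConstraint4} and \eqref{ConformalConstraint5}, to \eqref{BdyConformalConstraint4} and \eqref{BdyConformalConstraint5}, using \eqref{SolutionConstraints3}--\eqref{SolutionConstraints4} and the identity $\notD^k y_{ijk} \simeq 0$. The $\Lambda_{bcd}$ projections are handled through \eqref{ConformalConstraint6}--\eqref{ConformalConstraint7}, i.e. \eqref{BdyConformalConstraint6}--\eqref{BdyConformalConstraint7}, where the matter terms $\notJ_{ij} \simeq 0$, $\notJ_i \simeq -\notSigma\notj_i$ and the differential relation \eqref{DiffEqWeylMatter} — now guaranteed on $\mathscr{I}_{t_\bullet}$ by \Cref{Lemma:BoundaryData} — do the work. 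Finally, the two $P_{abcd}$ projections are exactly the conformal Gauss--Codazzi and Codazzi--Mainardi equations \eqref{BdyConformalConstraint9}--\eqref{BdyConformalConstraint10}, satisfied by \eqref{SolutionConstraints1} and \eqref{SolutionConstraints3} together with $\notK_{ij} = \varkappa\ell_{ij}$.

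The one genuinely new ingredient relative to the vacuum treatment of \cite{CarVal18b} is the electric Weyl data $\notd_{ij}$, and correspondingly the main obstacle is to confirm that \eqref{DiffEqWeylMatter} — equivalently the symmetric hyperbolic system \eqref{SymmHyp1}--\eqref{SymmHyp2} for $(w, w_i)$ — is genuinely satisfied on the whole cylinder $\mathscr{I}_{t_\bullet}$, not merely at the corner, so that the $\Lambda_{bcd}$-projections and the mixed $P_{edfg}$-projection containing $\notd_{ij}$ close. This is precisely what \Cref{Lemma:BoundaryData} delivers: given the basic data $(w_{\{ij\}}, p_i, p)$ and corner values for $(w, w_i)$, the fields $w$ and $w_i$ are uniquely propagated along $\mathscr{I}$ and satisfy \eqref{DiffEqWeylMatter}. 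One must also check that the pulled-back matter projections $\notJ_{ij}, \notJ_i$ computed from \eqref{CottonTensorDefinition} reduce correctly at $\Omega = 0$ — a short but model-independent computation already carried out in the proof of \Cref{Proposition:ConformalConstraintsConformalBoundary}. With the matter sector thus under control, the remaining verifications are routine substitutions, and assembling them yields the Proposition.
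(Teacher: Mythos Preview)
Your proposal is correct and follows essentially the same approach as the paper. The paper presents \Cref{Proposition:SummaryBoundaryData} as a summary of the preceding construction without a separate proof, the implicit argument being precisely the dictionary you spell out: the conformal constraint equations \eqref{BdyConformalConstraint1}--\eqref{BdyConformalConstraint10} on $\mathscr{I}$ are nothing other than the listed tangential projections of the zero-quantities \eqref{ZQ1}--\eqref{ZQ6}, and those constraints are satisfied by construction via \Cref{Proposition:ConformalConstraintsConformalBoundary} and \Cref{Lemma:BoundaryData}.
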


\begin{remark}
{\em Notice that the boundary data discussed throughout this section is not
necessarily consistent with the initial data at the corner. Demanding the
compatibility of these two sets of data requires to impose so-called
\emph{corner conditions}. If one considers asymptotically hyperbolic initial
data, a gluing construction allows to satisfy these conditions to any arbitrary
order \cite{ChrDel09}.  In the present construction, this issue can be
addressed by exploiting the conformal constraint equations and the conformal
Einstein field equations leading to a recursive procedure where the $n$-th
order conditions are dependent on the ones at $(n-1)$-th order for $n>1$. For a more
detailed discussion see \cite{LueVal14a,CarVal18b}.}
\end{remark}

\section{Boundary data for the zero-quantities}
\label{Section:PropOfConstr}

\Cref{Proposition:SummaryBoundaryData} establishes that the boundary
conditions discussed in \Cref{Section:BoundaryConditions} represent
the vanishing of a number of components of the zero-quantities on
$\mathscr{I}$ ---specifically, the ones involving intrinsic
derivatives. In order to show that these, in turn, imply the vanishing
of the remaining ones, we make use of their properties along with the
geometric wave equations \eqref{CWE1}-\eqref{CWE4} and
\eqref{EquationMetric} ---see \cite{CarHurVal19} for further
details. In particular, the relevant expressions are:
\begin{subequations}
\begin{eqnarray}
&& P^c{}_{acb} = 0, \label{TraceZQ6} \\
&& \Upsilon_a{}^a = 0, \label{TraceZQ1} \\
&& \nabla_a \Theta^a = \Upsilon^{ab}L_{ab} - \tfrac12 \Xi^2 \Upsilon^{ab}T_{ab}, \label{DivergenceZQ2} \\
&& \nabla_d P_{abc}{}^d = - \Delta_{abc} -\Xi\Lambda_{cab}, \label{DeltaLambdaIdentity} \\
&& \nabla_c \Delta_a{}^c{}_b = \Upsilon^{cd}d_{acbd} + \Lambda_{abc}\nabla^b\Xi
- L^{cd}P_{acbd}, \label{ZQWE3} \\
&& 3\nabla_{[d}\Delta_{ab]c} = \Lambda_{abdce} \nabla^{e}\Xi + 3 \Upsilon_{[a}{}^{e}d_{bd]ce} 
+ 3 L_{[a}{}^{e}P_{bd]ce} -  \tfrac{3}{2} \Xi^2 P_{[ab|c|}{}^{e}T_{d]e}
+ 2 \Xi \Upsilon_{[a}{}^{e}g_{b|c|}T_{d]e} \nonumber \\
&& \hspace{2cm} + \Xi \Upsilon_{[a}{}^{e}g_{|c|b}T_{d]e}, \label{IC3}
\end{eqnarray}
\end{subequations}
where $\Lambda_{dabce} \equiv 3\Lambda_{c[da} g_{b]e} - 3\Lambda_{e[da} g_{b]c}$.

\medskip

\noindent \textbf{Boundary data for $P^a{}_{bcd}$}. By definition, the field
$P^a{}_{bcd}$ inherits the symmetries of the Riemann tensor. This makes
possible to decompose it into three main components: 
\[
\hat{P}_{abcd} \equiv \ell_a{}^e \ell_b{}^f \ell_c{}^g \ell_d{}^h P_{efgh},
\quad \hat{P}_{abc} \equiv \notn^d \ell_a{}^e \ell_b{}^f \ell_c{}^g P_{edfg},
\quad \hat{P}_{ab} \equiv \notn^c \notn^d \ell_a{}^e \ell_b{}^f P_{ecfd}.
\]
The first two vanish by virtue of the constraints
\eqref{BdyConformalConstraint9} and \eqref{BdyConformalConstraint10}, while a
calculation shows that $\hat{P}_{ab} \simeq  P^c{}_{acb} - \notn_a \notn_b
\notn^c \notn^d P^e{}_{ced}$. From equation $\eqref{TraceZQ6}$ it follows that
$\hat{P}_{ab} \simeq 0$.

\medskip

\noindent \textbf{Boundary data for $\Upsilon_{ab}$}.  The zero-quantity
$\Upsilon_{ab}$ can be decomposed with respect to $\notn^a$ by defining the
projections $\gamma_{ab} \equiv \ell_a{}^c \ell_b{}^d \Upsilon_{cd}, \ \gamma_a
\equiv \notn^b \ell_a{}^c \Upsilon_{bc}$ and $\gamma \equiv \notn^a \notn^b
\Upsilon_{ab}$. Accordingly, we can write
\[
\Upsilon_{ab} = \gamma_{ab} + 2\gamma_{(a} \notn_{b)} + \gamma \notn_a \notn_b.
\]
The prescription of the boundary data discussed in the previous section implies
that $\gamma_{ab} \simeq 0$ and $\gamma_a \simeq 0$. On the other hand, writing
equation \eqref{TraceZQ1} in terms of these projections one has that $\gamma
\simeq \gamma_a{}^a \simeq 0$.

\medskip

\noindent \textbf{Boundary data for $\Theta_a$}. Consider the
projections $\theta_a \equiv \ell_a{}^b\Theta_b$ and $\theta
\equiv \notn^a\Theta_a$. Then we have that
\[
\Theta_a = \theta_a + \notn_a\theta.
\]
The boundary data prescription for $\notL_i$ is equivalent to $\theta_a \simeq
0$. In order to prove the vanishing of $\theta$ we use the identity
\eqref{DivergenceZQ2}. Since $\Upsilon_{ab} \simeq 0$, a short calculation
yields
\[
\notD\theta \simeq - 3\theta\varkappa(x).
\]
Without loss of generality, it is always possible, under a further conformal
rescaling of the form $g'_{ab} = \omega^2 g_{ab}$, to choose a conformal
representation for which $\varkappa \simeq 0$ ---see
\Cref{Lemma:ConverseProposition}--- meaning that $\notD\theta \simeq 0$.  Last
equation then implies that $\theta\varkappa(x) \simeq 0$; nevertheless, thanks
to the conformally invariance, this result is valid in general. If $\varkappa =
0$, the above rescaling enables us to find a representation for which
$\varkappa \neq 0$ leading, in any case, to conclude that $\theta \simeq 0$.

\medskip

\noindent \textbf{Boundary data for $\Delta_{abc}$}. Consider the system of
wave equations for the geometric fields
\eqref{ReducedWaveCFE1}-\eqref{ReducedWaveCFE5}. As initial and boundary data
sets for the system have already been established, a solution can then be
locally obtained in a neighbourhood of $\partial\mathcal{S}_\star$. In
particular, $d^a{}_{bcd}$ and its derivatives are well-defined, which means
that all the components of $\Lambda_{abc}$ are regular. Moreover, it
can be checked that the trivial data for $P^a{}_{bcd}$ imply that $\nabla_d
P_{abc}{}^d \simeq 0$.  Thus, from equation \eqref{DeltaLambdaIdentity} we
conclude that $\Delta_{abc} \simeq 0$.

\medskip

\noindent \textbf{Boundary data for $\Lambda_{abc}$}. In the case of
$\Lambda_{abc}$ we introduce the independent components $\lambda_{abc} \equiv
\ell_a{}^d \ell_b{}^e \ell_c{}^f \Lambda_{def}$, $\lambda_{ab} \equiv \notn^c
\ell_a{}^d \ell_b{}^e \Lambda_{cde}$, $\Lambda_{ab} \equiv \notn^c \ell_a{}^d
\ell_b{}^e \Lambda_{dce}$ and $\Lambda_{a} \equiv \notn^b \notn^c \ell_a{}^d
\Lambda_{bcd}$. In terms of these we have that 
\begin{equation}
\Lambda_{abc} = \lambda_{abc} + \lambda_{bc}\notn_a + 2\Lambda_{a[c}\notn_{b]} +
2 \Lambda_{[c}\notn_{b]}\notn_a. \label{DecompositionLambda}
\end{equation}
The boundary data for the electric and magnetic parts of $d^a{}_{bcd}$ are
equivalent to $\lambda_{ab} \simeq 0$ and $\Lambda_a \simeq 0$. Next, 
we proceed to prove that the two remaining components vanish as well.
First, consider the normal derivative of the identity \eqref{DeltaLambdaIdentity} and
project all its free indices onto $\mathscr{I}$. This results in
\[
\notSigma \lambda_{abc} \simeq - \notD\delta_{abc},
\]
where $\delta_{abc} \equiv \ell_a{}^d\ell_b{}^e\ell_c{}^f\Delta_{def}$.
Furthermore, projecting the identity \eqref{IC3} with
$\notn^a\ell_f{}^b\ell_g{}^d\ell_h{}^d$ and using the vanishing of $\Upsilon_{ab}$,
$\Delta_{ab}$ and $P^a{}_{bcd}$ on $\mathscr{I}$, a calculation yields
\[
\notD\delta_{abc} \simeq 0,
\]
which then implies that $\lambda_{abc} \simeq 0$.

To complete the proof, define two further components of $\Delta_{abc}$:
$\Delta_{ab} \equiv \notn^c \ell_a{}^d \ell_b{}^e \Delta_{cde}$ and $\Delta_a
\equiv \notn^b \notn^c \ell_a{}^d \Delta_{bdc}$. Observe that multiplying
\eqref{ZQWE3} by $\notn^c$, one readily finds that $\notD \Delta_a \simeq 0$.
This, in turn, can be used to obtain expressions for the normal derivative of
$\Delta_{ab}$ by means of equations \eqref{DeltaLambdaIdentity} and
\eqref{ZQWE3}, namely
\begin{subequations}
\begin{eqnarray}
&& \notD \Delta_{ab} \simeq \notSigma\Lambda_{ab},\\
&& \notD \Delta_{ab} \simeq -\notSigma\Lambda_{ba}.
\end{eqnarray}
\end{subequations}
From here we have that $\Lambda_{(ab)} \simeq 0$.  On the other hand, by
exploiting the identity $\Lambda_{[abc]}=0$, a simple calculation shows that
$2\Lambda_{[ab]} = \lambda_{ab} \simeq 0$, which proves that $\Lambda_{ab}
\simeq 0$.

\medskip

The above results can be summarised as:

\begin{lemma}
\label{Lemma:ZQBoundaryData}
Assume that the wave equations \eqref{CWE1}-\eqref{CWE4} and
\eqref{EquationMetric} are valid. If the boundary data for the geometric fields
are given as in \Cref{Proposition:SummaryBoundaryData}, then the geometric
zero-quantities vanish on $\mathscr{I}$.
\end{lemma}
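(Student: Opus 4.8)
The plan is to establish the lemma component by component: decompose each geometric zero-quantity with respect to the unit normal $\notn^a$ of $\mathscr{I}$ into its tangential and normal pieces, note that the tangential pieces already vanish by \Cref{Proposition:SummaryBoundaryData}, and then dispose of the remaining pieces either algebraically (via the contracted/trace identities \eqref{TraceZQ6}, \eqref{TraceZQ1}) or by deriving homogeneous transport equations along $\mathscr{I}$ out of the differential identities \eqref{DivergenceZQ2}, \eqref{DeltaLambdaIdentity}, \eqref{ZQWE3}, \eqref{IC3} together with the wave equations \eqref{CWE1}--\eqref{CWE4} and \eqref{EquationMetric}. The decisive organisational point is the dependency ordering $P\to\Upsilon\to\Theta\to\Delta\to\Lambda$, which one must verify so that every transport equation encountered has sources that are already known to vanish; $Z\simeq 0$ comes for free from \Cref{Proposition:SummaryBoundaryData}.

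First I would treat $P^a{}_{bcd}$: its tangential components $\hat P_{abcd}$, $\hat P_{abc}$ vanish by the constraints \eqref{BdyConformalConstraint9}, \eqref{BdyConformalConstraint10}, and using the Riemann symmetries one shows the only remaining piece satisfies $\hat P_{ab}\simeq P^c{}_{acb}-\notn_a\notn_b\notn^c\notn^d P^e{}_{ced}$, which is forced to vanish by \eqref{TraceZQ6}. For $\Upsilon_{ab}$, the data prescription gives $\gamma_{ab}\simeq 0$ and $\gamma_a\simeq 0$, while the trace identity \eqref{TraceZQ1} gives $\gamma\simeq\gamma_a{}^a\simeq 0$. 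For $\Theta_a$, the boundary data for $\notL_i$ yields $\theta_a\simeq 0$, and substituting $\Upsilon_{ab}\simeq 0$ into \eqref{DivergenceZQ2} produces the homogeneous transport equation $\notD\theta\simeq -3\varkappa\theta$ for the normal part. Since the initial value of $\theta$ on the fiduciary cut is not yet controlled, here I would invoke the conformal freedom of \Cref{Lemma:ConverseProposition} to pass to a representative with $\varkappa\simeq 0$, conclude $\notD\theta\simeq 0$ and hence $\theta\varkappa\simeq 0$, and then argue by conformal covariance that $\theta\simeq 0$ in any gauge. For $\Delta_{abc}$, using that a solution of \eqref{ReducedWaveCFE1}--\eqref{ReducedWaveCFE5} exists near $\partial\mathcal{S}_\star$ so that $d^a{}_{bcd}$ and hence $\Lambda_{abc}$ are regular, the vanishing of $P^a{}_{bcd}$ gives $\nabla_d P_{abc}{}^d\simeq 0$, so \eqref{DeltaLambdaIdentity} forces $\Delta_{abc}\simeq 0$.

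The hard part will be $\Lambda_{abc}$, whose components $\lambda_{abc}$ and $\Lambda_{ab}$ survive and cannot be read off algebraically: one must instead build transport equations from \emph{normal derivatives} of the earlier identities, which is where the bookkeeping is heaviest. Differentiating \eqref{DeltaLambdaIdentity} along $\notn^a$ and projecting onto $\mathscr{I}$ gives $\notSigma\lambda_{abc}\simeq -\notD\delta_{abc}$; then an appropriate projection of \eqref{IC3}, using that $\Upsilon_{ab}$, $\Delta_{ab}$ and $P^a{}_{bcd}$ already vanish on $\mathscr{I}$, shows $\notD\delta_{abc}\simeq 0$, hence $\lambda_{abc}\simeq 0$. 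For $\Lambda_{ab}$, contracting \eqref{ZQWE3} with $\notn^c$ yields $\notD\Delta_a\simeq 0$, and then \eqref{DeltaLambdaIdentity} and \eqref{ZQWE3} give the two expressions $\notD\Delta_{ab}\simeq\notSigma\Lambda_{ab}$ and $\notD\Delta_{ab}\simeq-\notSigma\Lambda_{ba}$, so $\Lambda_{(ab)}\simeq 0$; finally the algebraic symmetry $\Lambda_{[abc]}=0$ gives $2\Lambda_{[ab]}=\lambda_{ab}\simeq 0$, so $\Lambda_{ab}\simeq 0$.

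Assembling all the pieces, every component of $\Upsilon_{ab}$, $\Theta_a$, $\Delta_{abc}$, $\Lambda_{abc}$, $Z$ and $P^a{}_{bcd}$ vanishes on $\mathscr{I}$, which is the assertion of the lemma. I expect the genuine obstacles to be exactly the two I have flagged: handling the $\varkappa$-degeneracy in the $\Theta_a$ step by a conformal rescaling and then transporting the conclusion back to an arbitrary gauge, and carrying out the normal-derivative arguments for $\Lambda_{abc}$ while keeping track of which zero-quantities have already been shown to vanish so that the correct terms in \eqref{IC3} and \eqref{ZQWE3} drop out.
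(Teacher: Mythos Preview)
Your proposal is correct and follows essentially the same route as the paper: the same dependency ordering $P\to\Upsilon\to\Theta\to\Delta\to\Lambda$, the same trace identities \eqref{TraceZQ6}, \eqref{TraceZQ1} for the algebraic steps, the same conformal-rescaling trick for the $\theta$ transport equation, and the same normal-derivative manipulations of \eqref{DeltaLambdaIdentity}, \eqref{ZQWE3}, \eqref{IC3} to kill $\lambda_{abc}$ and $\Lambda_{ab}$. The only point worth flagging is that in the $\Delta_{abc}$ step the identity \eqref{DeltaLambdaIdentity} gives $\Delta_{abc}\simeq 0$ not because $\Lambda_{abc}$ is merely regular but because the coefficient $\Xi$ vanishes on $\mathscr{I}$; you have this implicitly, but it is the actual mechanism.
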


\begin{remark}
\label{Remark:ZQInitialData}
{\em The components of the zero-quantities on
  $\partial\mathcal{S}_\star$ 
corresponding to projections on this hypersurface vanish by the way the anti-de
Sitter-like initial data has been constructed. Components with a transversal
(i.e., timelike) projection can be read as a first order evolution system for
the geometric conformal fields. Thus, in order to ensure the vanishing of the
zero-quantities on $\mathcal{S}_\star$, one needs, firstly, to produce a
solution to the conformal constraint equations. Secondly, one reads the
transversal components of the zero-quantities as definitions for the normal
derivatives of the conformal fields which can be readily computed from the
solution to the conformal constraints. In this sense, the transversal
components of the zero-quantities vanish \emph{a fortiori}. Furthermore, as a
consequence of this procedure, the normal derivatives of the zero-quantities
trivially vanish on $\mathcal{S}_\star$.}
\end{remark}

\section{Matter models}
\label{Section:MatterModels}

In this section several specific tracefree models of interest are
studied. Namely, the conformally invariant scalar field, the Maxwell field and the
Yang-Mills field. The problem of the coupling of these matter models to
the system \eqref{ReducedWaveCFE1}-\eqref{ReducedWaveCFE5} has been addressed
in \cite{CarHurVal19}. In particular, a system of wave equations has been
obtained in conjunction with an analysis for the respective subsidiary
variables. The aim of this section is to identify the basic boundary data
corresponding to each matter model required by the systems
\eqref{ReducedWaveCFE1}-\eqref{ReducedWaveCFE5} and
\eqref{SymmHyp1}-\eqref{SymmHyp2}. An investigation of their relation to the
propagation of the constraints on $\mathscr{I}$ is also provided.

\subsection{The conformally invariant scalar field}

The conformally invariant scalar field is a first example of an explicit
tracefree matter model of interest. Let $\tilde{\phi}$ be a scalar field on
$(\tilde{\mathcal{M}}, \tilde{\bmg})$ governed by the equation
\begin{equation*}
\tilde{\nabla}_a \tilde{\nabla}^a \tilde{\phi} - \frac16 \tilde{R} \tilde{\phi} = 0.
\end{equation*}
Defining the unphysical scalar field $\phi \equiv \Xi^{-1}\tilde{\phi}$, it is
this equation remains invariant under the conformal rescaling
\eqref{ConformalRescaling}. This means that $\phi$ satisfies
\begin{equation}
\nabla_a \nabla^a \phi - \frac16 R \phi = 0.
\label{InvariantScalarField}
\end{equation}
Furthermore, the energy-momentum tensor associated to this field is
\begin{equation}
T_{ab} = \nabla_{a}\phi \nabla_{b}\phi -  \tfrac{1}{2} \phi \nabla_{a}\nabla_{b}\phi -  
\tfrac{1}{4} g_{ab} \nabla_{c}\phi \nabla^{c}\phi + \tfrac{1}{2} \phi^2 L_{ab}.
\label{TabScalarField}
\end{equation}

Due to the second order derivatives of $\phi$ in the last expression,
the coupling of 
this matter model to the wave equations
\eqref{ReducedWaveCFE1}-\eqref{ReducedWaveCFE5} will result in the appearance
of up to fourth order derivatives of $\phi$ in the evolution system. This can
be remedied by first noticing that the third order derivatives are of the form
$\nabla_{[a}\nabla_{b]}\nabla_c\phi$ ---see equation
\eqref{CottonTensorDefinition}--- so using the commutator of covariant
derivatives they can be expressed in terms of $\nabla_a\phi$. First and second
derivatives, on the other hand, can be removed by introducing the auxiliary
fields
\begin{equation}
\phi_a \equiv \nabla_a\phi, \quad \phi_{ab} \equiv \nabla_a\nabla_b \phi,
\label{AuxiliaryVariablesScalarField}
\end{equation}
satisfying the system of wave equations
\begin{subequations}
\begin{eqnarray}
&& \square \phi_a = 2\phi^bL_{ab} + \frac13 R\phi_a + \frac16\phi\nabla_a R, \label{WaveEqnDPhi} \\
&& \square\phi_{ab} = \tfrac{1}{2} \phi_{ab} R -  \tfrac{1}{3} R \phi L_{ab} - 2 \phi^{cd} L_{cd}
g_{ab} -  \tfrac{1}{6} \phi^{c} g_{ab} \nabla_{c}R + \tfrac{1}{6} \phi
\nabla_{(a}\nabla_{b)}R - 2 \Xi \phi^{cd}d_{acbd} \nonumber \\
&& \hspace{1.2cm} + 8 \phi_{(a}{}^{c}L_{b)c} + 2 \Xi \phi^{c}T_{(a|c|b)} + \tfrac{2}{3}
\phi_{(a}\nabla_{b)}R + 2 \phi^{c}\nabla_{(a}L_{b)c}
- 2 \phi^{c}d_{(a|c|b)}{}^{d}\nabla_{d}\Xi.\label{WaveEqnDDPhi} 
\end{eqnarray}
\end{subequations}
Writing them in terms of the reduced wave operator, these equations together
with \eqref{InvariantScalarField} must be coupled to the system
\eqref{ReducedWaveCFE1}-\eqref{ReducedWaveCFE5}.

\subsubsection{Basic boundary data}

Next, we analyse the data one must prescribe on $\mathscr{I}$. As the
fields $\phi, \ \phi_a$ and $\phi_{ab}$ satisfy wave equations, it is necessary
to prescribe suitable Dirichlet boundary data for them. First observe that
$\phi$ can be freely prescribed as its value is not constrained by any
equation intrinsic to $\mathscr{I}$. Furthermore, as $\phi$ is governed by the
second order equation \eqref{InvariantScalarField}, then first order
derivatives also constitute a piece of basic data. More specifically, it is
only necessary to establish the values of the normal derivative $\notD\phi$ on
the conformal boundary.

\subsubsection{Boundary data for the evolution systems}

In order to analyse the Dirichlet boundary data for the auxiliary fields it is
convenient to introduce the projections
\[
\varphi_a \equiv \ell_a{}^b \phi_b, \qquad \varphi \equiv \not\! n^a\phi_a, \qquad
\bar{\phi}_{ab} \equiv \ell_a{}^c \ell_b{}^d \phi_{cd}, \qquad \bar{\phi}_a \equiv \not\! n^c \ell_a{}^b
\phi_{bc}.
\]
From the discussion above, $\varphi_a$ and $\varphi$ can be obtained directly
once the basic data have been imposed; these represent the boundary data for
$\phi_a$. On the other hand, observing that $\phi_a{}^a = \tfrac16R\phi$ we can
write
\[
\phi_{ab} = \bar{\phi}_{ab} + \not\! n_a \bar{\phi}_b + \not\! n_b \bar{\phi}_a
+ (\tfrac16 R\phi - \bar{\phi}_a{}^a)\not\!n_a\not\!n_b.
\]
Since $\bar{\phi}_{ab}$ and $\bar{\phi}_a$ can, via commutation of covariant derivatives, be
determined from $\phi$ on the conformal boundary it follows that the boundary
data for $\phi_{ab}$ is completely determined from the basic data.

Finally, we focus on the boundary data $p_i$ and $p$ required for the system
\eqref{SymmHyp1}-\eqref{SymmHyp2}. From the decompositions for
$\phi_a$ and $\phi_{ab}$, along with expression \eqref{TabScalarField}
one has that 
\[
\not\! j_i \simeq \varphi \varphi_{i} - \frac{1}{2} \phi^2 \notD_{i}\varkappa
+ \frac{1}{2} \phi (\varkappa \varphi_i - \notD_i \varphi).
\]
This shows that $\not\! j_i$ can be expressed in terms of the basic boundary
data and, in consequence, the fields $p_i$ and $p$ are computable.

\subsubsection{Boundary data for the subsidiary fields}

In the same spirit of  \Cref{Lemma:WEZQ}, it is now necessary to prove the
consistency of the definitions \eqref{AuxiliaryVariablesScalarField}. To this
end we introduce the subsidiary fields
\begin{subequations}
\begin{eqnarray}
&& Q_a \equiv \phi_a - \nabla_a \phi, \\
&& Q_{ab} \equiv \phi_{ab} - \nabla_a\nabla_b \phi,
\end{eqnarray}
\end{subequations}
where $Q_{ab}$ is symmetric and tracefree.  From the previous discussion, it can
be easily checked that the prescription of boundary data for $\phi$ is
equivalent to
\[
\ell_a{}^b Q_b \simeq 0, \quad \notn^a Q_a \simeq 0, \quad \ell_a{}^c \ell_b{}^d Q_{cd} \simeq 0,
\quad \notn^b \ell_a{}^c Q_{bc} \simeq 0.
\]
Exploiting the fact that $Q_a{}^a =0$, it readily follows that $\notn^a \notn^b
Q_{ab} \simeq -\ell^{ab}Q_{ab} \simeq 0$, implying the vanishing of $Q_a$ and
$Q_{ab}$ on $\mathscr{I}$.

\begin{remark}
{\em Similarly, we prescribe initial data consisting of $\phi$ and $D\phi$ on
$\mathcal{S}_\star$ in an analogous manner as it was done on $\mathscr{I}$. In
consequence, vanishing initial data for $Q_a$ and $Q_{ab}$ are obtained, which
in turn implies that their intrinsic first derivatives vanish. Furthermore, it
can be directly that their normal derivatives vanish too. Hence, $\nabla_a
Q_{b} = 0$ and $\nabla_a Q_{bc} = 0$ on $\mathcal{S}_\star$.}
\end{remark}

On the other hand, assuming the validity of wave equations for $\phi$, $\phi_a$
and $\phi_{ab}$ the subsidiary fields satisfy geometric wave equations. 
Denoting $Q_a$ and $Q_{ab}$ as $\bmQ$ and $\bm{Q'}$, respectively, we have
\begin{eqnarray*}
&& \square Q_{a} = H_a(\bmQ, \bm{Q'}), \\ 
&& \square Q_{ab}= H_{ab}(\bmQ, \bm{Q'}, \bm\Delta),
\end{eqnarray*}
where $H_a$ and $H_{ab}$ are homogeneous functions of their arguments. From the
discussion above, along with \Cref{Lemma:ZQBoundaryData} and
\Cref{Remark:ZQInitialData}, this system can be supplemented with suitable
vanishing initial-boundary data. In consequence, the unique solution on the
spacetime is the trivial one, i.e. $Q_a = Q_{ab} = 0$, proving the consistency of
definitions \eqref{AuxiliaryVariablesScalarField}.

\subsubsection{Summary}

The material of this subsection can be summed up as:

\begin{lemma}
\label{Lemma:ScalarField}
Let $\phi$ be the conformally invariant scalar field satisfying equation
\eqref{InvariantScalarField} with energy-momentum tensor given by
\eqref{TabScalarField}. If $\phi$ and its normal derivative are prescribed on
$\mathcal{S}_\star$ and $\mathscr{I}$, then the system
\eqref{ReducedWaveCFE5}-\eqref{ReducedWaveCFE5} coupled to
\eqref{InvariantScalarField}, \eqref{WaveEqnDPhi} and \eqref{WaveEqnDDPhi},
written in terms of the reduced wave operator, constitute a proper system of
quasilinear wave equations for the Einstein-conformally invariant scalar field
system.
\end{lemma}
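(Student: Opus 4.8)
The plan is to verify, step by step, that once $\phi$ and $D\phi$ are prescribed on $\mathcal{S}_\star$ and $\mathscr{I}$, every term appearing in the coupled system is expressible through fields already governed by wave equations, so that the principal parts are those of a quasilinear wave system. First I would recall that the difficulty identified in the text is the presence of high-order derivatives of $\phi$: the energy-momentum tensor \eqref{TabScalarField} contains $\nabla_a\nabla_b\phi$, and the zero-quantities and the rescaled Cotton tensor \eqref{CottonTensorDefinition} therefore introduce third derivatives of $\phi$ into \eqref{ReducedWaveCFE3}--\eqref{ReducedWaveCFE4}. The resolution, already sketched in the subsection, is to promote $\phi_a\equiv\nabla_a\phi$ and $\phi_{ab}\equiv\nabla_a\nabla_b\phi$ to independent unknowns satisfying \eqref{WaveEqnDPhi}--\eqref{WaveEqnDDPhi}; any third-order derivative that survives is of the antisymmetrised form $\nabla_{[a}\nabla_{b]}\nabla_c\phi$, which the Ricci identity converts into curvature times $\nabla_c\phi=\phi_c$, i.e.\ into lower-order data. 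So the first step of the proof is the bookkeeping verification that after this substitution the matter contributions to \eqref{ReducedWaveCFE1}--\eqref{ReducedWaveCFE5} contain only $\phi$, $\phi_a$, $\phi_{ab}$ and the geometric conformal fields, with no derivatives falling on them beyond those absorbed into $\blacksquare$.

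Next I would assemble the full list of unknowns, namely $(g_{\mu\nu},\Xi,s,L_{\mu\nu},d_{\mu\nu\lambda\rho},\phi,\phi_a,\phi_{ab})$, and write the system as \eqref{ReducedWaveCFE1}--\eqref{ReducedWaveCFE5} together with \eqref{InvariantScalarField} and \eqref{WaveEqnDPhi}--\eqref{WaveEqnDDPhi}, all expressed in terms of the reduced wave operator $\blacksquare$. By the discussion of Section~\ref{Section:Gauge}, replacing $\square$ by $\blacksquare$ in generalised wave coordinates removes all undesired second-order derivatives of $g_{\mu\nu}$ from the principal part, so each equation becomes $\blacksquare(\text{unknown})=(\text{quasilinear source})$, with the source depending on the unknowns and their first derivatives only. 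One must check that this remains true for the two new wave equations \eqref{WaveEqnDPhi}--\eqref{WaveEqnDDPhi}: the right-hand sides involve $R$ and $\nabla R$, which are fixed gauge quantities $\mathcal{R}(x)$ and $\nabla\mathcal{R}(x)$, and the curvature-type terms $\Xi\,d_{acbd}$, $L_{ab}$, $\nabla_{(a}L_{b)c}$, $d\nabla\Xi$, all of which are either among the unknowns or first derivatives thereof. The single subtle point is $\nabla_{(a}L_{b)c}$ in \eqref{WaveEqnDDPhi}, a first derivative of the unknown $L_{bc}$; since the reduced system is first-order symmetric-hyperbolic-compatible in its lower-order terms this is harmless for the quasilinear wave structure, but it should be flagged explicitly.

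With the evolution system in proper quasilinear wave form, I would then invoke the data already constructed: the geometric boundary data of \Cref{Proposition:SummaryBoundaryData} and \Cref{Lemma:BoundaryData}, the scalar-field boundary data ($\phi$ and $\notD\phi$ prescribed, from which $\varphi_a,\varphi,\bar\phi_{ab},\bar\phi_a$ and hence the full Dirichlet data for $\phi_a,\phi_{ab}$ follow by the commutation identities and the trace relation $\phi_a{}^a=\tfrac16R\phi$), the matter source data $p_i,p$ computed from $\notj_i$, and the anti-de Sitter-like initial data of \Cref{Definition:AdSData} supplemented by $\phi,D\phi$ on $\mathcal{S}_\star$. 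Corner-compatibility conditions are assumed to the required order, exactly as in the hypotheses of \Cref{Theorem:MainTracefree}. Then the local existence/uniqueness theory for quasilinear wave equations on a manifold with timelike boundary (\cite{CheWah83,DafHru85}) applies, yielding a solution in a neighbourhood of $\partial\mathcal{S}_\star$. Finally I would point to the propagation-of-constraints arguments already established --- \Cref{Lemma:WEZQ}, \Cref{Lemma:ZQBoundaryData}, \Cref{Remark:ZQInitialData}, and the subsidiary equations $\square Q_a=H_a(\bmQ,\bm{Q'})$, $\square Q_{ab}=H_{ab}(\bmQ,\bm{Q'},\bm\Delta)$ for the definitions \eqref{AuxiliaryVariablesScalarField} --- to conclude that the wave-equation solution actually solves the tracefree conformal Einstein field equations coupled to \eqref{InvariantScalarField}, whence by \Cref{Proposition:FriedrichProposition} one recovers a genuine Einstein-conformally invariant scalar field spacetime. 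The main obstacle is the first step: a careful and honest accounting that the auxiliary-field substitution plus the Ricci identity genuinely eliminates \emph{all} derivatives of $\phi$ above second order from \eqref{ReducedWaveCFE1}--\eqref{ReducedWaveCFE5}, since if even one uncontrolled third-order term survived in the principal part the system would cease to be a wave system and the cited existence theory would not apply.
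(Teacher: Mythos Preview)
Your proposal is correct and follows essentially the same route as the paper: the lemma is a summary statement, and the paper's ``proof'' is precisely the preceding discussion --- the observation that third derivatives of $\phi$ arise only in the antisymmetrised form handled by the Ricci identity, the promotion of $\phi_a,\phi_{ab}$ to independent unknowns with their own wave equations, the construction of Dirichlet data for them from the basic data $\phi,\notD\phi$ via the trace relation and commutation, and the subsidiary-field argument for $Q_a,Q_{ab}$. Your write-up is in fact more systematic than the paper's, and you correctly flag the $\nabla_{(a}L_{b)c}$ term in \eqref{WaveEqnDDPhi} as a first derivative of an unknown (harmless for quasilinear wave structure); the only mild overshoot is that you fold in the existence step and \Cref{Proposition:FriedrichProposition}, which the paper reserves for the main theorem rather than this lemma.
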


\subsection{The Maxwell field}

The next example under consideration is the electromagnetic field. In the physical
spacetime the information is encoded in the antisymmetric Faraday
tensor $\tilde{F}_{ab}$ which satisfies Maxwell equations
\begin{eqnarray*}
&& \tilde{\nabla}^a \tilde{F}_{ab} = 0, \\
&& \tilde{\nabla}_{[a} \tilde{F}_{bc]} = 0.
\end{eqnarray*}
Defining the rescaling $F_{ab} \equiv \tilde{F}_{ab}$, the unphysical Maxwell
equations have the same form, namely
\begin{subequations}
\begin{eqnarray}
&& \nabla^a F_{ab} = 0, \label{MaxwellEqn1} \\
&& \nabla_{[a} F_{bc]} = 0. \label{MaxwellEqn2}
\end{eqnarray}
\end{subequations}
In terms of the Hodge dual $F^*_{ab} \equiv
-\frac12\epsilon_{ab}{}^{cd}F_{cd}$, equation \eqref{MaxwellEqn2} can be
written, alternatively, as
\begin{equation}
\nabla_b F^*_{ab} = 0. \label{MaxwellEqn2Dual}
\end{equation}
Also, the energy-momentum tensor of the Maxwell field takes the form
\begin{equation}
T_{ab} = F_{ac}F_b{}^c - \frac14 g_{ab}F_{cd}F^{cd},
\label{TabFaraday}
\end{equation}
in agreement with conservation law \eqref{UnphysicalConservationLaw}.

From equation \eqref{TabFaraday} is clear that the coupling of the Maxwell
field to the system \eqref{ReducedWaveCFE1}-\eqref{ReducedWaveCFE5} results in
the appearance of second order derivatives of $F_{ab}$. The hyperbolicity of
the system can be restored adopting a similar strategy as for the conformally
invariant scalar field. For this purpose we introduce the fully tracefree
tensor field
\begin{equation}
F_{abc} \equiv \nabla_a F_{bc}.
\label{DefinitionFabc}
\end{equation}
Formulae \eqref{MaxwellEqn1}-\eqref{MaxwellEqn2} imply that $F_{ab}$ and $F_{abc}$
satisfy the following system of geometric wave equations:
\begin{subequations}
\begin{eqnarray}
&& \square F_{ab}=\tfrac{1}{3}F_{ab}R-2\Xi F^{cd}d_{acbd}, \label{WaveEquationFaraday} \\
&& \square F_{abc}  = -2\Xi F_{a}{}^{d} T_{bcd} + 4\Xi F_{[b}{}^{d} T_{|ad|c]} 
- 2 \Xi F_{a}{}^{de} d_{bdce} - 4 \Xi F^{d}{}_{[b}{}^{e} d_{c]ead} 
+ \tfrac{1}{2} F_{abc} R  + 4 F^{d}{}_{bc} L_{ad} \nonumber \\
&& \hspace{1.4cm}  - 4 F^{d}{}_{a[b} L_{c]d}
- 4 F^{d}{}_{[b}{}^{e} g_{c]a} L_{de} + \tfrac{1}{3} F_{bc} \nabla_{a}R 
- 2 F^{de} d_{ade[b}\nabla_{c]}\Xi - 4 \Xi F^{de} \nabla_{[b}d_{c]ead} \nonumber \\
&& \hspace{1.4cm} - \tfrac{1}{3} F_{a[b} \nabla_{c]}R 
- 2 F_{[b}{}^{e} d_{c]ead} \nabla^{d}\Xi -  F_{d}{}^{e} d_{aebc} \nabla^{d}\Xi 
- 4 F_{[b}{}^{e} d_{c]dae} \nabla^{d}\Xi - F_{a}{}^{e} d_{bcde} \nabla^{d}\Xi \nonumber \\
&& \hspace{1.4cm} + 2 F^{ef} g_{a[b} d_{c]edf}\nabla^{d}\Xi + \tfrac{1}{3} g_{a[b} F_{c]d} \nabla^{d}R.
\label{WaveEquationDerFaraday}
\end{eqnarray}
\end{subequations}
Replacing $\square$ by the reduced wave operator, these are cast as hyperbolic
wave equations and can be coupled to the system
\eqref{ReducedWaveCFE1}-\eqref{ReducedWaveCFE5}.

\subsubsection{Basic boundary data}

The Faraday tensor accepts a simple decomposition with respect to $\notn^a$ in
terms of its electric and magnetic parts defined, respectively, as $F_a \equiv
\not\! n^c \ell_a{}^b F_{bc}$ and $F^*_a \equiv \not\! n^c \ell_a{}^b
F^*_{bc}$, namely
\begin{equation}
F_{ab} = F_a \not\!n_b - F_b \notn_a + \epsilon^c{}_{ab}F^*_c, \qquad
F^*_{ab} = F^*_a \not\!n_b - F^*_b \not\!n_a - \epsilon^c{}_{ab}F_c,
\label{FaradayDecompositions}
\end{equation}
where $\epsilon_{abc} \equiv \notn^d \epsilon_{adbc}$ is the 3-volume form
induced on $\mathscr{I}$. Unlike the conformally invariant scalar field, these
components cannot be freely prescribed, but Maxwell equations impose
a set of constraints on $\mathscr{I}$:
\begin{subequations}
\begin{eqnarray}
&& \not\!\! D^i F_i \simeq 0, \label{GaussEqn1} \\
&& \not\!\! D^iF^*_i \simeq 0. \label{GaussEqn2}
\end{eqnarray}
\end{subequations}
In order to identify the basic data for the Maxwell field we perform a further
decomposition with respect to $\nu^i$. Introducing the
projections $f_i \equiv s_i{}^jF_j, \ f \equiv \nu^iF_i, \ f^*_i \equiv
s_i{}^jF^*_j$ and $f^* \equiv \nu^i F^*_i$, these take the form
\begin{subequations}
\begin{eqnarray}
&& \delta f - kf - \delta^i f_i \simeq 0, \label{MaxwellConstraint1} \\
&& \delta f^* - kf^* - \delta^i f^*_i \simeq 0, \label{MaxwellConstraint2}
\end{eqnarray}
\end{subequations}
which represent an evolution system for $f$ and $f^*$.  Observe that the data
required to establish a well-posed problem for this system correspond to the
fields $f_i$ and $f^*_i$ on $\mathscr{I}$ along with initial conditions for $f$
and $f^*$ at $\partial\mathcal{S}_\star$.

\subsubsection{Boundary data for the evolution systems}

Having identified the basic data for the Maxwell field, we now proceed to
verify that $\not\! j_i$ can be determined in terms of the basic data. A
calculation using equation \eqref{TabFaraday} yields
\begin{equation}
\not\! j_ i = -\epsilon_{ijk}F^{j}F^{*k}.
\label{PoyntingFaraday}
\end{equation}
Directly from the definitions of $p_i$ and $p$ it follows that 
\begin{equation}
p_i = \epsilon_{ij} (f^* f^j - f f^{*j}), \qquad
p = -\epsilon_{jk}f^jf^{*k}.
\end{equation}
where $\epsilon_{ij} \equiv \nu^k\epsilon_{kij}$ is the 2-dimensional volume
form on the foliation $\partial\mathcal{S}_t$ satisfying $\nu^i\epsilon_{ij}
=0$. Here $f$ and $f^*$ can be obtained from evolving equations
\eqref{MaxwellConstraint1}-\eqref{MaxwellConstraint2}. Therefore, the boundary
data for the system \eqref{SymmHyp1}-\eqref{SymmHyp2} are completely
determined.

Next we show that the basic data also provides data for $F_{abc}$. For this
purpose, we introduce a number of components with respect to $\notn^a$:
\begin{eqnarray*}
& f_{abc} \equiv \ell_{a}{}^d \ell_{b}{}^e \ell_{c}{}^f F_{def}, \quad
f_{ab} \equiv \notn^d \ell_a{}^e \ell_b{}^f F_{def}, \quad \hat{f}_{ab} \equiv \notn^e \ell_a{}^d \ell_b{}^f F_{def},
\quad F_a \equiv \notn^b \notn^c \ell_a{}^d F_{bcd}.
\end{eqnarray*}
As $F_{abc}$ possesses the same symmetries as $\Lambda_{abc}$ ---see equation
\eqref{DecompositionLambda}--- we can write
\begin{equation}
F_{abc} = f_{abc} + f_{bc}\notn_a + 2\hat{f}_{a[c}\notn_{b]} + 2F_{[c}\notn_{b]}\notn_a.
\label{DerFaradayDecomposition}
\end{equation}
From their definitions, a series of straightforward calculations result in
\begin{subequations}
\begin{eqnarray}
&& f_{abc} \simeq \varkappa(x)f_{[b}\ell_{c]a} + \varkappa(x)\epsilon_{defg}\ell_a{}^e \ell_b{}^f
\ell_c{}^g f^{*d} + \epsilon_{egh}\ell_b{}^g \ell_c{}^h \notD_a f^{*e}, \\
&& \hat{f}_{ab} \simeq - \varkappa(x)\epsilon_{dfc}\ell_a{}^f \ell_b{}^c f^{*d} - \notD_a f_b.
\end{eqnarray}
\end{subequations}
On the other hand, one can exploit the symmetries of $F_{abc}$ to obtain
expressions for the remaining components. In particular, using that $F_{[abc]}
= 0$ and $F_{ab}{}^a = 0$, along with decomposition
\eqref{DerFaradayDecomposition}, we obtain
\begin{subequations}
\begin{eqnarray}
&& f_{ab} \simeq 2\hat{f}_{[ab]}, \\
&& F_a \simeq f_{ba}{}^b.
\end{eqnarray}
\end{subequations}
Thus, once basic data for $f^a$ and $f^{*a}$ have been provided on
$\mathscr{I}$, all the boundary data for the field $F_{abc}$ are computable.

\subsubsection{Boundary data for the subsidiary fields}

In the next step of the analysis, we are now required to prove that any
solution to the wave equations
\eqref{WaveEquationFaraday}-\eqref{WaveEquationDerFaraday} is also a solution
to the unphysical Maxwell Equations. Accordingly, we define the subsidiary
fields
\begin{subequations}
\begin{eqnarray}
&& M_a \equiv \nabla^b F_{ab},\label{DefinitionZQMaxwell1} \\
&& M_{abc}\equiv \nabla_{[a}F_{bc]}, \label{DefinitionZQMaxwell2} \\
&& Q_{abc} \equiv F_{abc} - \nabla_a F_{bc}. \label{DefinitionZQMaxwellAuxiliary} 
\end{eqnarray}
\end{subequations}
Boundary data for the fields obtained from the constraints
\eqref{GaussEqn1}-\eqref{GaussEqn2} is equivalent to
\[
\notn^a M_a \simeq 0, \qquad \ell_a{}^d \ell_b{}^e \ell_c{}^f M_{def} \simeq 0.
\]
Additionally, we also have that $Q_{abc} \simeq 0$ as a consequence from the way
the data for $F_{abc}$ were constructed. The vanishing of the remaining boundary data
for the subsidiary fields follows from observing that 
\begin{subequations}
\begin{eqnarray}
&& \ell_a{}^b M_b = \ell_a{}^b (F^c{}_{cb} - Q^c{}_{cb})
\simeq f^c{}_{ca} + F_a \simeq 0, \\
&& \notn^c\ell_a{}^d \ell_b{}^e M_{cde} = \notn^c\ell_a{}^d \ell_b{}^e (F_{[cde]} - Q_{[cde]})
\simeq \tfrac13 (f_{ab} + 2\hat{f}_{[ba]}) \simeq 0.
\end{eqnarray}
\end{subequations}

\begin{remark}
\label{Remark:FaradayInitialData}
{\em Following a similar argument, the constraints on $\mathcal{S}_\star$
establish basic initial data which implies the vanishing of the subsidiary
variables involving intrinsic derivatives, as well as of $Q_{abc}$. In the same
vein as in \Cref{Remark:ZQInitialData}, Maxwell equations additionally provide
evolution equations for the corresponding electric and magnetic fields. Taking
these as definitions for the normal derivatives, the remaining components of
$M_a$ and $M_{abc}$ trivially vanish on $\mathcal{S}_\star$. In this regard, a
solution to the Maxwell equations must first be obtained. Under this
construction, the first derivatives of the subsidiary fields vanish on
$\mathcal{S}_\star$.}
\end{remark}

The propagation of the constraints is proven with the help of the system of
geometric wave equations the subsidiary variables satisfy. Using an analogous
notation as in the previous subsection, and representing $M_a$ and $M_{abc}$ by
$\bmM$ and $\bm{M'}$, respectively, we have that
\begin{eqnarray*}
&& \square M_{a} = H_a(\bmM), \\
&& \square M_{abc} = H_{abc}(\bm{M'}), \\
&& \square Q_{abc} = L_{abc}(\bmM, \bm{M'}, \bmQ, \bm\Lambda).
\end{eqnarray*}
As conditions for the vanishing of the initial and boundary data for the subsidiary fields have already
been established, \Cref{Lemma:ZQBoundaryData} and \Cref{Remark:ZQInitialData}
help us to show that the homogeneity of the above system implies that the only
solution on the spacetime corresponds to $M_a=0$, $M_{abc} = 0$ and
$Q_{abc}=0$. In other words, the Maxwell equations are satisfied.

\medskip

The discussion of this subsection can be summarised in the following
statement:
\begin{lemma}
\label{Lemma:MaxwellField}
Let $F_{ab}$ be the Faraday tensor satisfying the Maxwell equations
\eqref{MaxwellEqn1}-\eqref{MaxwellEqn2} with energy-momentum tensor given by
\eqref{TabFaraday}. If the fields $f_i, \ f^*_i$ are prescribed on $\mathscr{I}$
together with $f$ and $f^*$ at $\partial\mathcal{S}_\star$, then the system
\eqref{ReducedWaveCFE1}-\eqref{ReducedWaveCFE5} coupled to
\eqref{WaveEquationFaraday}-\eqref{WaveEquationDerFaraday}, written in terms of
the reduced wave operator, constitute a proper system of quasilinear wave
equations for the Einstein-Maxwell system.
\end{lemma}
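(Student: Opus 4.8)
The plan is to assemble the ingredients established in this subsection into a verification of the two assertions implicit in the statement: (a) that the coupled system, once written with the reduced wave operator, is a \emph{proper} quasilinear system of wave equations --- i.e.\ no second-order derivatives of $g_{\mu\nu}$ or of $F_{ab}$ survive in any principal part --- and (b) that any solution of this wave system, with the prescribed data, yields a genuine solution of the Einstein--Maxwell system.

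I would address (a) first. The energy--momentum tensor \eqref{TabFaraday} is algebraic (quadratic) in $F_{ab}$, so the undifferentiated matter contributions to \eqref{ReducedWaveCFE1}--\eqref{ReducedWaveCFE5} are polynomial in the conformal unknowns once $F_{ab}$ is adjoined. The only potentially dangerous matter terms are those built from the rescaled Cotton tensor $T_{abc}$ and its first derivative, occurring in \eqref{ReducedWaveCFE3}--\eqref{ReducedWaveCFE4}. By \eqref{CottonTensorDefinition}, $T_{abc}$ is linear in $\nabla_{[a}T_{b]c}$ and hence in $\nabla_a F_{bc}$; introducing the independent field $F_{abc}\equiv\nabla_aF_{bc}$ via \eqref{DefinitionFabc} replaces those derivatives by $F_{abc}$, and $\nabla T_{abc}$ then becomes first order in $F_{abc}$. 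Thus no second derivatives of $F_{ab}$ remain in any principal part. Adjoining the wave equations \eqref{WaveEquationFaraday}--\eqref{WaveEquationDerFaraday} for $F_{ab}$ and $F_{abc}$, rewriting each with $\blacksquare$ in generalised wave coordinates, and using the fact (Section~\ref{Section:Gauge}) that $\blacksquare$ removes the undesired second derivatives of $g_{\mu\nu}$, the full system \eqref{ReducedWaveCFE1}--\eqref{ReducedWaveCFE5} coupled to \eqref{WaveEquationFaraday}--\eqref{WaveEquationDerFaraday} acquires diagonal principal part $g^{\mu\nu}\partial_\mu\partial_\nu$ on every component; it is a genuine wave system wherever $g_{\mu\nu}$ is Lorentzian, and by construction its coefficients depend only on the unknowns (not their derivatives in the top order), so it is quasilinear.

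Next I would check that the data listed in the statement close the problem. The constraints \eqref{GaussEqn1}--\eqref{GaussEqn2} reduce, after the $2+1$ split, to the symmetric hyperbolic system \eqref{MaxwellConstraint1}--\eqref{MaxwellConstraint2} for $(f,f^\ast)$ driven by the free data $(f_i,f^\ast_i)$ on $\mathscr{I}$ with initial values $(f,f^\ast)$ on $\partial\mathcal{S}_\star$; this fixes all components of $F_{ab}$ on $\mathscr{I}$, hence --- via the decomposition \eqref{DerFaradayDecomposition} together with $F_{[abc]}=0$ and $F_{ab}{}^a=0$ --- all components of $F_{abc}$ on $\mathscr{I}$, providing Dirichlet data for \eqref{WaveEquationFaraday}--\eqref{WaveEquationDerFaraday}. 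Formula \eqref{PoyntingFaraday} shows $\not\! j_i$, and therefore $p_i$ and $p$, are expressed through the basic data, so the auxiliary symmetric hyperbolic system \eqref{SymmHyp1}--\eqref{SymmHyp2} for the electric Weyl component $\not\! d_{ij}$ is likewise closed and the geometric boundary data of \Cref{Proposition:SummaryBoundaryData} apply verbatim. Initial data on $\mathcal{S}_\star$ are supplied by \Cref{Definition:AdSData} together with the Maxwell constraints on $\mathcal{S}_\star$ and the defining relation \eqref{DefinitionFabc} read transversally (cf.\ \Cref{Remark:FaradayInitialData}).

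Finally, for (b), I would run the subsidiary-system argument already set up above: with $M_a\equiv\nabla^bF_{ab}$, $M_{abc}\equiv\nabla_{[a}F_{bc]}$, $Q_{abc}\equiv F_{abc}-\nabla_aF_{bc}$, the wave equations force a closed homogeneous system $\square M_a=H_a(\bmM)$, $\square M_{abc}=H_{abc}(\bm{M'})$, $\square Q_{abc}=L_{abc}(\bmM,\bm{M'},\bmQ,\bm\Lambda)$; the boundary construction gives vanishing data on $\mathscr{I}$, and the argument of \Cref{Remark:FaradayInitialData}/\Cref{Remark:ZQInitialData} gives vanishing Cauchy data on $\mathcal{S}_\star$, whence uniqueness forces $M_a=M_{abc}=Q_{abc}=0$, i.e.\ the unphysical Maxwell equations hold and $F_{abc}=\nabla_aF_{bc}$. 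Combining this with \Cref{Lemma:WEZQ} (propagation of the conformal zero-quantities, hence \eqref{TraceCFE1}--\eqref{TraceCFE5}) and \Cref{Proposition:FriedrichProposition}, the reconstructed $\tilde g_{ab}=\Xi^{-2}g_{ab}$ solves the Einstein equations with $\lambda<0$ and tracefree Maxwell source. I expect the main obstacle to be the careful bookkeeping in step (a) --- confirming that after introducing $F_{abc}$ and passing to $\blacksquare$ no higher derivative of $F_{ab}$ (nor of $g_{\mu\nu}$) hides inside the Weyl- or Cotton-type terms of \eqref{WaveEquationDerFaraday} and \eqref{ReducedWaveCFE4} --- rather than any conceptual difficulty; the remainder is an application of results already established here and in \cite{CarHurVal19,CarVal18b}.
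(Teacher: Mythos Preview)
Your proposal is correct and follows essentially the same route as the paper: the lemma is stated there as a summary of the preceding subsection, and you have simply organised those same ingredients --- the introduction of $F_{abc}$ to absorb derivatives of $F_{ab}$, the reduction of the boundary constraints to the symmetric hyperbolic system \eqref{MaxwellConstraint1}--\eqref{MaxwellConstraint2}, the determination of $\not\! j_i$ and of the $F_{abc}$ data from the basic data, and the homogeneous subsidiary system for $M_a$, $M_{abc}$, $Q_{abc}$ --- into an explicit argument. Your part (b) goes slightly beyond what the lemma itself asserts (which concerns only the \emph{proper} quasilinear structure) by carrying the reconstruction through \Cref{Proposition:FriedrichProposition}; in the paper that final step is deferred to the proof of the main theorem, but including it here does no harm.
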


\subsection{The Yang-Mills field}

As a third and last example of a tracefree matter field we consider the
Yang-Mills field. Due to its similarities with the Maxwell field, the
analysis is, in great measure, analogous to the one in the previous section.
Let $\mathfrak{g}$ be the Lie algebra of a group $\mathfrak{G}$. The physical
Yang-Mills field consists of a set of fields $\tilde{F}^\fraka{}_{ab}$ and
gauge potentials $\tilde{A}^\fraka{}_a$, where the indices $\fraka, \ \frakb,
\dots$ take values in $\mathfrak{g}$. Let
$C^\fraka{}_{\frakb\frakc}=C^\fraka{}_{[\frakb\frakc]}$ be the structure
constants of $\mathfrak{g}$. The physical Yang-Mills equations are
\begin{eqnarray*}
&& \tilde{\nabla}_a \tilde{A}^\fraka{}_b - \tilde{\nabla}_b \tilde{A}^\fraka{}_a +
C^\fraka{}_{\frakb\frakc} \tilde{A}^\frakb{}_a \tilde{A}^\frakc{}_b -\tilde{F}^\fraka{}_{ab} =0, \\
&& \tilde{\nabla}^a \tilde{F}^\fraka{}_{ab} + C^\fraka{}_{\frakb\frakc}
\tilde{A}^{\frakb a} \tilde{F}^\frakc{}_{ab}=0, \\
&& \tilde{\nabla}_{[a} \tilde{F}{}^\fraka{}_{bc]} + C^\fraka{}_{\frakb\frakc}
\tilde{A}^\frakb{}_{[a} \tilde{F}^\frakc{}_{bc]} =0.
\end{eqnarray*}
Defining the rescaled fields $F^\fraka{}_{ab} \equiv \tilde{F}^\fraka{}_{ab}$
and $A^\fraka{}_{a} \equiv \tilde{A}^\fraka{}_{a}$, the unphysical Yang-Mills
equations are:
\begin{subequations}
\begin{eqnarray}
&& \nabla_a A^\fraka{}_b - \nabla_b A^\fraka{}_a +
C^\fraka{}_{\frakb\frakc} A^\frakb{}_a A^\frakc{}_b -F^\fraka{}_{ab} =0, \label{UnphysicalYM1} \\
&& \nabla^a F^\fraka{}_{ab} + C^\fraka{}_{\frakb\frakc}
A^{\frakb a} F^\frakc{}_{ab}=0, \label{UnphysicalYM2} \\
&& \nabla_{[a} F^{\fraka}{}_{bc]} + C^\fraka{}_{\frakb\frakc}A^{\frakb}{}_{[a} F^{\frakc}{}_{bc]} = 0.
\label{UnphysicalYM3}
\end{eqnarray}
\end{subequations}
By defining the Hodge dual of the strength field $F^{*\fraka}{}_{ab} \equiv -
\tfrac12\epsilon_{ab}{}^{cd}F_{cd}$, relation \eqref{UnphysicalYM3} can
be equivalently expressed as
\begin{equation*}
\nabla^b F^{*\fraka}{}_{ba} + C^\fraka{}_{\frakb\frakc}A^{\frakb a} F^{*\frakc}{}_{ab} = 0.
\end{equation*}

As the associated energy-momentum tensor is given by
\begin{equation}
T_{ab} = \tfrac{1}{4} \delta_{\fraka\frakb}F^\fraka{}_{cd}
F^{\frakb cd} g_{ab}
-\delta_{\fraka\frakb}F^\fraka{}_{ac}F^\frakb{}_b{}^c, \label{EMTensorYM}
\end{equation}
undesired second order derivatives of $F^\fraka{}_{ab}$ will emerge in the
system \eqref{ReducedWaveCFE1}-\eqref{ReducedWaveCFE4}.  Inspired by the
analysis of the Maxwell field, we can deal with this problem by defining the
auxiliary field
\begin{equation}
F^\fraka{}_{abc} \equiv \nabla_a F^\fraka{}_{bc} + C^\fraka{}_{\frakb\frakc}A^\frakb{}_a
F^\frakc{}_{bc}.
\label{DefinitionFabcYM}
\end{equation}
The motivation behind the addition of the term containing the structure
constants will become apparent when the boundary data for the subsidiary
variables are computed. 

\begin{remark}
{\em One key aspect that characterises the coupling of the Yang-Mills to the
main system of geometric fields is the fact that, in order to carry out a
hyperbolic reduction, we require to incorporate a set of gauge source functions
\begin{equation}
f^\fraka (x) \equiv \nabla^a A^\fraka{}_a.
\label{Definition:YMGaugeSourceFunction}
\end{equation}}
\end{remark}

\noindent In terms of this gauge quantity, the fields $F^\fraka{}_{ab}$ and
$F^\fraka{}_{abc}$ satisfy a system of geometric wave equations, namely
\begin{subequations}
\begin{eqnarray}
&& \hspace{-0.7cm} \square A^{\fraka}{}_{a} = \tfrac{1}{6} A^{\fraka}{}_{a} R + 2 A^{\fraka b} L_{ab} 
+ C^{\fraka}{}_{\frakb \frakc} F^{\frakc}{}_{ab} A^{\frakb b}  
+ C^{\fraka}{}_{\frakb \frakc} f^\frakc(x) A^{\frakb}{}_{a}  
- C^{\fraka}{}_{\frakb \frakc} A^{\frakb b} \nabla_{b}A^{\frakc}{}_{a}
+ \nabla_{a}f^{\fraka}(x), \label{CWEPotentialYM} \\
&& \hspace{-0.7cm} \square F^{\fraka}{}_{ab} = -2 \Xi F^{\fraka cd}
d_{acbd} + \tfrac{1}{3} F^{\fraka}{}_{ab} R + 2C^\fraka{}_{\frakb\frakc} F^\frakb{}_a{}^c F^\frakc{}_{bc} 
- 2C^\fraka{}_{\frakb\frakc} F^\frakc{}_{cab}A^{\frakb c}
+ C^\fraka{}_{\frakb\frakc} f^\frakb(x) F^\frakc{}_{ab} \nonumber \\
&& \hspace{0.8cm} - C^\fraka{}_{\frakb\frake} C^\frake{}_{\frakc\frakd} F^\frakd{}_{ab}A^{\frakb c} A^\frakc{}_c.
\label{CWEFaradayYM}\\
&& \hspace{-0.7cm} \square F^\fraka{}_{abc} = \tfrac{1}{2} F^{\mathfrak{a}}{}_{abc} R + 4 F^{\mathfrak{a} d}{}_{bc} L_{ad} 
+ 2 F^{\mathfrak{b} d}{}_{bc} F^{\mathfrak{c}}{}_{ad} C^{\mathfrak{a}}{}_{\mathfrak{b}
\mathfrak{c}} -  F^{\mathfrak{c}}{}_{abc} f^{\mathfrak{b}}(x)
C^{\mathfrak{a}}{}_{\mathfrak{b} \mathfrak{c}} + \tfrac{1}{3} F^{\mathfrak{a}}{}_{bc}\nabla_{a}R \nonumber \\
&& \hspace{0.9cm} - F^{\mathfrak{d}}{}_{abc} A^{\mathfrak{b} d} A^{\mathfrak{c}}{}_{d} C^{\mathfrak{a}}{}_{\mathfrak{b} \frake}
C^{\frake}{}_{\mathfrak{c} \mathfrak{d}}  - 2 A^{\mathfrak{b} d} C^{\mathfrak{a}}{}_{\mathfrak{b} \mathfrak{c}}
\nabla_{d}F^{\mathfrak{c}}{}_{abc} - F^{\mathfrak{a}}{}_{d}{}^{e} d_{aebc} \nabla^{d}\Xi
 - F^{\mathfrak{a}}{}_{a}{}^{e} d_{bcde} \nabla^{d}\Xi \nonumber \\
&& \hspace{0.9cm} + 2 \Xi F^{\mathfrak{a} de} \nabla_{e}d_{adbc}
- 4 \Xi F^{\mathfrak{a} d}{}_{[b}{}^{e}d_{|ad|c]e} - 2 \Xi F^{\mathfrak{a}}{}_{a}{}^{de}d_{[b|d|c]e}
- 4 F^{\mathfrak{a} d}{}_{a[b}L_{c]d} + 4 \Xi F^{\mathfrak{a}}{}_{[b}{}^{d}T_{c]da} \nonumber \\
&& \hspace{0.9cm} + 4 \Xi F^{\mathfrak{a}}{}_{[b}{}^{d}T_{|ad|c]} - \tfrac{1}{3}F^{\mathfrak{a}}{}_{a[b}\nabla_{c]}R 
+ 4 F^{\mathfrak{b}}{}_{a[b}{}^{d}F^{\mathfrak{c}}{}_{c]d}C^{\mathfrak{a}}{}_{\mathfrak{b}
\mathfrak{c}} - 4 F^{\mathfrak{a} d}{}_{[b}{}^{e}L_{|de}g_{a|c]} \nonumber \\
&& \hspace{0.9cm} - 2 \Xi F^{\mathfrak{a} de}T_{[b|de}g_{a|c]} 
- 4 F^{\mathfrak{a}}{}_{[b}{}^{d}d_{|ad|c]}{}^{e}\nabla_{e}\Xi 
- 2 F^{\mathfrak{a}}{}_{[b}{}^{d}d_{|a|}{}^{e}{}_{c]d}\nabla_{e}\Xi
+ 2 F^{\mathfrak{a} de}d_{ad[b|e|}\nabla_{c]}\Xi \nonumber \\
&& \hspace{0.9cm} - \tfrac{1}{3} F^{\mathfrak{a}}{}_{[b}{}^{d}g_{|a|c]}\nabla_{d}R 
- 2 F^{\mathfrak{a}de}g_{a[b}\nabla_{|d|}L_{c]e}.
\label{CWEDerFaradayYM}
\end{eqnarray}
\end{subequations}
Again, one must express these equations in terms of the reduced wave operator
when they are coupled the system
\eqref{ReducedWaveCFE1}-\eqref{ReducedWaveCFE5}.

\subsubsection{Basic boundary data}

Due to the presence of the gauge potentials $A^\fraka{}_a$ in the Yang-Mills
equations, which are coupled to the strength potentials, the identification of
its basic data will require a bit more of work. Nevertheless, the approach to
be adopted will be the same as for the Maxwell field. First, introducing the
projections $F^\fraka{}_a \equiv \notn^c\ell_a{}^b F^\fraka{}_{bc}$ and
$F^\fraka{}_a \equiv \notn^c\ell_a{}^b F^{*\fraka}{}_{bc}$, the fields
$F^{\fraka}{}_{ab}$ and $F^{*\fraka}{}_{ab}$ accept decompositions that are
completely analogous to the ones in \eqref{FaradayDecompositions}. On the other
hand, for the gauge potentials we define $\mathcal{A}^\fraka{}_{a} \equiv
\ell_a{}^b A^\fraka{}_b$ and $\mathcal{A}^\fraka \equiv \notn^a A^\fraka{}_a$.
Accordingly, we have 
\begin{equation}
A^\fraka{}_a = \mathcal{A}^\fraka{}_a + \notn^a \mathcal{A}^\fraka.
\end{equation}
Equations \eqref{UnphysicalYM1}-\eqref{UnphysicalYM3}, provide a set of
relations that are intrinsic to the conformal boundary. The projections defined
above enable us to write them as follows:
\begin{subequations}
\begin{eqnarray}
&&\notD^i F^\fraka{}_i \simeq C^\fraka{}_{\frakb\frakc}\mathcal{A}^\frakc{}_i F^{\frakb i}, 
\label{UnphysicalYM1Decomp1}\\
&& \notD^i F^{*\fraka}{}_i \simeq C^\fraka{}_{\frakb\frakc}\mathcal{A}^\frakc{}_i F^{*\frakb i}, 
\label{UnphysicalYM2Decomp1} \\
&& \notD_i \mathcal{A}^\fraka{}_j - \notD_j\mathcal{A}^\fraka{}_i \simeq 
\epsilon^m{}_{kl}\ell_i{}^k \ell_j{}^l F^{*\fraka}{}_m - C^\fraka{}_{\frakb\frakc}
\mathcal{A}^\frakb{}_i \mathcal{A}^\frakc{}_j. \label{UnphysicalYM3Decomp1}
\end{eqnarray}
\end{subequations}
This system is supplemented with the corresponding decomposition 
of equation \eqref{Definition:YMGaugeSourceFunction}, namely
\begin{equation}
\notD^i \mathcal{A}^\fraka{}_i \simeq f^\fraka(x) - 3\varkappa(x)\mathcal{A}^\fraka - \notD\mathcal{A}^\fraka.
\label{YMGaugeSourceFunctionDecomp1}
\end{equation}
The last expression makes evident that the derivatives of the components of
$A^\fraka{}_a$ are not independent of each other. In particular, this suggests
that $\notD\mathcal{A}^\fraka$ is a piece of the basic data to be prescribed on
$\mathscr{I}$.

Next, a further decomposition on $\mathscr{I}$ with respect to the
timelike vector $\nu^a$ can be carried out. For this purpose we define a number of
objects: $f^\fraka{}_i \equiv \ell_i{}^j F^\fraka{}_j$, $f^\fraka \equiv \nu^i
F^\fraka{}_i$, $a^\fraka{}_i \equiv \ell_i{}^j \mathcal{A}_j$ and $a^\fraka
\equiv \nu^i \mathcal{A}^\fraka{}_i$. It turns out that after suitable
contractions, equations
\eqref{UnphysicalYM1Decomp1}-\eqref{UnphysicalYM3Decomp1} and
\eqref{YMGaugeSourceFunctionDecomp1} produce the intrinsic relations 
\begin{subequations}
\begin{eqnarray}
&& \delta f^\fraka \simeq C^\fraka{}_{\frakb\frakc}(a^\frakc f^\frakb - a^\frakc{}_i f^{\frakb i})
+ kf^\fraka + \delta^i f^\fraka{}_i, \label{SymmHypYM1} \\
&& \delta f^{*\fraka} \simeq C^\fraka{}_{\frakb\frakc}(a^\frakc f^{*\frakb} - a^\frakc{}_i f^{*\frakb i})
+ kf^\fraka + \delta^i f^{*\fraka}{}_i, \label{SymmHypYM2} \\
&& \delta a^\fraka{}_i - \delta_i a^\fraka \simeq C^\fraka{}_{\frakb\frakc}a^\frakb a^\frakc{}_i 
+ 2\epsilon_i{}^j f^{*\fraka}{}_j, \label{SymmHypYM3} \\
&& \delta^i a^\fraka{}_i - \delta a^\fraka \simeq ka^\fraka + f^\fraka(x) - 3\varkappa(x)\mathcal{A}^\fraka -
\notD \mathcal{A}^\fraka. \label{SymmHypYM4}
\end{eqnarray}
\end{subequations}
Prescribing the fields $f^\fraka{}_i$, $f^{*\fraka}{}_i$, $f^\fraka(x)$,
$\mathcal{A}^\fraka$ and $\notD\mathcal{A}^\fraka$ on the conformal boundary,
this constitutes a symmetric hyperbolic system for the fields $f^\fraka$,
$f^{*\fraka}$, $a^\fraka{}_i$ and $a^\fraka$ provided that initial values for
them are given at $\partial\mathcal{S}_\star$.

\subsubsection{Boundary data for the evolution equations}

In view of the fact that $F^\fraka{}_{ab}$ has the same symmetries as its
counterpart $F_{ab}$, it is clear that the energy flux can be calculated in an
analogous fashion. This results in the relation 
\begin{equation}
\notj_i = -\epsilon_{ijk}F^{\fraka j} F^{*\frakb k} \delta_{\fraka\frakb},
\end{equation}
from where the boundary data $p_i$ and $p$ can be directly computed. The data
for $A^\fraka{}_a$, on the other hand, can be extracted from solving the system
\eqref{SymmHypYM1}-\eqref{SymmHypYM4} Regarding $F^\fraka{}_{abc}$, its
symmetries make it possible to perform a decomposition similar to the one
carried out in \eqref{DerFaradayDecomposition}. Ultimately, this shows that the
basic data on $\mathscr{I}$ allow us to determine $F^\fraka{}_{abc}$.

\subsubsection{Data for the subsidiary fields}

The final piece in the analysis of this matter field corresponds to proving that
the basic data on $\mathscr{I}$ implies vanishing data for the relevant
subsidiary fields. The system
\eqref{UnphysicalYM1Decomp1}-\eqref{UnphysicalYM3Decomp1} represents the
relations
\[
\notn^a M^\fraka{}_a \simeq 0, \qquad  \ell_a{}^d \ell_b{}^e \ell_c{}^f M^\fraka{}_{def} \simeq 0,
\qquad \ell_a{}^c \ell_b{}^d M^\fraka{}_{cd} \simeq 0.
\]
Analogous to the Maxwell field, the construction of the data for
$F^\fraka{}_{abc}$ implies that $Q^\fraka{}_{abc} \simeq 0$.  The parallelism
between the Yang-Mills and Maxwell fields makes clear that the components
$\ell_{a}{}^b M^\fraka{}_b$ and $\notn^c \ell_a{}^d \ell_b{}^e
M^\fraka{}_{cde}$ vanish on the conformal boundary. Concerning $M^\fraka{}_{ab}$,
it is possible to show that it satisfies the identity
\[
\tfrac12 C^\fraka{}_{\frakb\frakc} F^{\frakb ab} M^\frakc{}_{ab} - C^\fraka{}_{\frakb\frakc}
A^{\frakb a} M^\frakc{}_a = 0.
\]
As it has been argued that $M^\fraka{}_a \simeq 0$, then it follows that for an
arbitrary field $F^\fraka{}_{ab}$ the condition $M^\fraka{}_{ab} \simeq 0$ must
be satisfied. For completeness, the consistency of the gauge can be proved by
defining a further subsidiary field: $P^\fraka \equiv \nabla^a A^\fraka{}_a -
f^\fraka(x)$. Trivially, equation \eqref{YMGaugeSourceFunctionDecomp1} implies
that $P^\fraka \simeq 0$.

\begin{remark}
{\em Vanishing data for the subsidiary variables and their first derivatives on
$\mathcal{S}_\star$ follow from an argument similar to the one in
\Cref{Remark:FaradayInitialData}.}
\end{remark}

The subsidiary variables associated to the Yang-Mills field satisfy a set of
homogeneous geometric wave equations. Adopting similar notation for function
which are for homogeneous in their arguments as in the two previous sections,
one can write
\begin{eqnarray*}
&& \square M^\fraka{}_a = H^\fraka{}_a (\bmM, \bm\nabla\bmM, \bm{M'}, \bm{M''}, \bmQ, \bm\nabla\bmQ), \\
&& \square M^\fraka{}_{ab} = H^\fraka{}_{ab}(\bm{M'}, \bm{M'}, \bm\nabla\bm{M'}, \bm{M''}, \bmQ),\\
&& \square M^\fraka{}_{abc} = H^\fraka{}_{abc}(\bm{M'}, \bm{M''}, \bm\nabla\bm{M''}, \bmQ, \bm\nabla\bmQ),\\
&& \square Q^\fraka{}_{abc} = L^\fraka{}_{abc}(\bmM, \bm{M'}, \bm{M''}, \bmQ, \bm\nabla\bmQ, \bm\Lambda),\\
&& \square P^\fraka = H^\fraka(\bmM, \bm{M'},\bmP, \bm{P'}),
\end{eqnarray*}
where $\bmM$, $\bm{M'}$, $\bm{M''}$ stand, respectively, for $M^\fraka{}_a$,
$M^\fraka{}_{ab}$ and $M^\fraka{}_{abc}$. Again, \Cref{Lemma:ZQBoundaryData} and
\Cref{Remark:ZQInitialData} allow us to establish the existence and uniqueness of the
trivial solution for this system, which proves that the Yang-Mills equations
are satisfied.

\subsubsection{Summary}

Next, we summarise the above discussion:

\begin{lemma}
\label{Lemma:YM}
Let $F^\fraka{}_{ab}$ and $A^\fraka{}_a$ be fields satisfying the Yang-Mills
equations \eqref{UnphysicalYM1}-\eqref{UnphysicalYM3} with energy-momentum
tensor given by \eqref{EMTensorYM}, and subject to a set of gauge source
functions given by \eqref{Definition:YMGaugeSourceFunction}. If the fields
$f^\fraka{}_i$, $f^{*\fraka}{}_i$, $f^\fraka(x)$, $\mathcal{A}^\fraka$ and
$\notD\mathcal{A}^\fraka$ are prescribed on $\mathscr{I}$ together with values
for $f^\fraka$, $f^{*\fraka}$, $a^\fraka{}_i$ and $a^\fraka$ at
$\partial\mathcal{S}_\star$, then the system
\eqref{ReducedWaveCFE1}-\eqref{ReducedWaveCFE5} coupled to
\eqref{CWEPotentialYM}-\eqref{CWEDerFaradayYM}, written in terms of the reduced
wave operator constitute a proper system of quasilinear wave equations for the
Einstein-Yang-Mills system.
\end{lemma}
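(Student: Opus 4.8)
The plan is to run, for the Yang--Mills field, the programme already carried out for the conformally invariant scalar field and the Maxwell field, taking care of the extra complications caused by the presence of the gauge potentials $A^\fraka{}_a$. First I would identify the obstructions to hyperbolicity: substituting the energy--momentum tensor \eqref{EMTensorYM} and the rescaled Cotton tensor \eqref{CottonTensorDefinition} into the reduced conformal field equations \eqref{ReducedWaveCFE1}-\eqref{ReducedWaveCFE5} produces second-order derivatives of $F^\fraka{}_{ab}$ in the principal parts (and third-order ones, but only in the antisymmetrised form $\nabla_{[a}\nabla_{b]}\nabla_c F^\fraka{}_{de}$, hence reducible to curvature times first derivatives by the Ricci identity). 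To absorb the remaining first and second derivatives I would promote the gauge-covariant derivative of the strength field to an independent variable, $F^\fraka{}_{abc}$ as in \eqref{DefinitionFabcYM}, and --- this is where Yang--Mills genuinely differs from Maxwell --- introduce the gauge source function $f^\fraka(x)=\nabla^aA^\fraka{}_a$ of \eqref{Definition:YMGaugeSourceFunction}, which turns \eqref{UnphysicalYM1} into the wave equation \eqref{CWEPotentialYM} for $A^\fraka{}_a$. Together with \eqref{CWEFaradayYM}-\eqref{CWEDerFaradayYM}, rewritten in terms of the reduced wave operator $\blacksquare$ once generalised wave coordinates are in force (so that all undesired second derivatives of $g_{\mu\nu}$ drop out), and coupled to \eqref{ReducedWaveCFE1}-\eqref{ReducedWaveCFE5}, this is a proper quasilinear wave system --- the assertion of the lemma.

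Second, I would produce the boundary data that make the coupling to the Weyl-boundary subsystem \eqref{SymmHyp1}-\eqref{SymmHyp2} well defined. Performing the $2+1$ split of $\mathscr{I}$ adapted to the coordinate gauge sources, the Yang--Mills equations \eqref{UnphysicalYM1}-\eqref{UnphysicalYM3} and the gauge condition \eqref{Definition:YMGaugeSourceFunction} reduce on $\mathscr{I}$ to the symmetric hyperbolic system \eqref{SymmHypYM1}-\eqref{SymmHypYM4}; declaring $f^\fraka{}_i,\ f^{*\fraka}{}_i,\ f^\fraka(x),\ \mathcal{A}^\fraka$ and $\notD\mathcal{A}^\fraka$ as basic boundary data on $\mathscr{I}$ and $f^\fraka,\ f^{*\fraka},\ a^\fraka{}_i,\ a^\fraka$ as data at $\partial\mathcal{S}_\star$ determines $F^\fraka{}_{ab}$ and $A^\fraka{}_a$ on $\mathscr{I}$; from these, $\notj_i=-\epsilon_{ijk}F^{\fraka j}F^{*\frakb k}\delta_{\fraka\frakb}$ and hence $p_i,\ p$ are computable, fixing the boundary data for \eqref{SymmHyp1}-\eqref{SymmHyp2}; finally the symmetries $F^\fraka{}_{[abc]}=0$, $F^\fraka{}_{ab}{}^a=0$ let me read off the boundary data for $F^\fraka{}_{abc}$ from its decomposition, exactly as in \eqref{DerFaradayDecomposition}.

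Third, I would close the argument by propagation of the constraints. I would define the subsidiary quantities $M^\fraka{}_a$, $M^\fraka{}_{ab}$, $M^\fraka{}_{abc}$, $Q^\fraka{}_{abc}\equiv F^\fraka{}_{abc}-\nabla_aF^\fraka{}_{bc}-C^\fraka{}_{\frakb\frakc}A^\frakb{}_aF^\frakc{}_{bc}$ and $P^\fraka\equiv\nabla^aA^\fraka{}_a-f^\fraka(x)$, show (via \cite{CarHurVal19}) that they satisfy a homogeneous system of geometric wave equations with right-hand sides homogeneous in the subsidiary fields and $\bm\Lambda$, and verify that the data constructed above force them to vanish on $\mathscr{I}$ (the delicate case is $M^\fraka{}_{ab}$, which does not vanish directly from a constraint but only through the Yang--Mills Bianchi-type identity $\tfrac12 C^\fraka{}_{\frakb\frakc}F^{\frakb ab}M^\frakc{}_{ab}-C^\fraka{}_{\frakb\frakc}A^{\frakb a}M^\frakc{}_a=0$ together with $M^\fraka{}_a\simeq 0$) and, using the constraints on $\mathcal{S}_\star$, on $\mathcal{S}_\star$ as well, in the manner of \Cref{Remark:ZQInitialData}. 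Uniqueness for the homogeneous subsidiary system then gives $M^\fraka{}_a=M^\fraka{}_{ab}=M^\fraka{}_{abc}=Q^\fraka{}_{abc}=P^\fraka=0$, so that any solution of the reduced wave system solves the unphysical Yang--Mills equations with $F^\fraka{}_{abc}$ the genuine gauge-covariant derivative and the gauge condition satisfied.

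The main obstacle I anticipate is precisely the interplay between the choice of auxiliary variable and the propagation of the constraints: the term $C^\fraka{}_{\frakb\frakc}A^\frakb{}_aF^\frakc{}_{bc}$ in \eqref{DefinitionFabcYM} is inserted exactly so that $Q^\fraka{}_{abc}$ has vanishing boundary data and a closed homogeneous evolution, and with this choice one must still check that (i) the wave equation \eqref{CWEDerFaradayYM} for $F^\fraka{}_{abc}$ retains principal part $\blacksquare$, (ii) the matter contribution to $\notj_i$ and to the Weyl-boundary system depends only on the basic data, and (iii) the subsidiary system genuinely closes --- none of which is automatic, and all of which require the gauge source $f^\fraka(x)$ to be carried consistently through every step. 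The remaining verifications are, modulo lengthy \texttt{xAct}-assisted computations, parallel to the Maxwell case.
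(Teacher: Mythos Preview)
Your proposal is correct and follows essentially the same route as the paper: you identify the same obstructions and cure them with the gauge-covariant auxiliary field \eqref{DefinitionFabcYM} and the gauge source function \eqref{Definition:YMGaugeSourceFunction}, you derive the same symmetric hyperbolic boundary system \eqref{SymmHypYM1}--\eqref{SymmHypYM4} to fix the basic data, you compute $\notj_i$ and the data for $F^\fraka{}_{abc}$ from the same symmetry argument, and you close with the same subsidiary-field propagation, including the delicate point that $M^\fraka{}_{ab}\simeq 0$ follows only from the Bianchi-type identity rather than directly from a constraint. Your final paragraph correctly isolates the one genuinely nontrivial verification --- that the extra term $C^\fraka{}_{\frakb\frakc}A^\frakb{}_aF^\frakc{}_{bc}$ in \eqref{DefinitionFabcYM} is precisely what makes the subsidiary system close --- which is exactly the point the paper flags as the motivation behind that definition.
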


\section{Final remarks}
\label{Section:FinalRemarks}

The construction presented in this work depends crucially on the
existence of a quasilinear system of wave equations for the conformal fields
coupled to tracefree matter. Remarkably, the
construction of a homogeneous system of geometric wave equations for the geometric
zero-quantities only assumes a tracefree matter field, without the
specification of the particular model being necessary. Nevertheless, if a
different tracefree matter model is examined, one must proceed to couple it by
constructing suitable quasilinear wave equations as well as initial and
boundary data, not to mention this may require the propagation of a set of
subsidiary variables. The construction of anti-de Sitter-like spacetimes with
an arbitrary matter component under conformal methods still remains as an open
problem. 

Based on the structural properties of the system of wave equations, we expect
that his scheme can be implemented in a straightforward manner in current
numerical codes. In this respect, a key ingredient of the boundary data is the
prescription of the electric part of the Weyl tensor, which is determined
through the solution of a symmetric hyperbolic system. However, neither the
properties of suitable data for this system nor their implications for the
stability of the system are clear. A possible approach consists in exploiting
the theory of symmetric hyperbolic systems to identify the conditions leading
to a well-posed Cauchy problem. This, potentially, might shed further light on
the effect of particular choices of boundary conditions on the
stability/instability of the anti-de Sitter spacetime.

\subsection{Acknowledgments}

DAC thanks support granted by CONACyT (480147), Mexico.



\end{document}